\author{Giuliano Losa}
\email{giuliano@stellar.org}
\affiliation{%
  \institution{Stellar Development Foundation}
  \country{USA}
}
\author{Eli Gafni}
\email{eli@ucla.edu}
\affiliation{%
  \institution{University of California, Los Angeles}
  \country{USA}
}
\affiliation{%
  \institution{Stellar Development Foundation}
  \country{USA}
}
\begin{document}

\title{Understanding Read-Write Wait-Free Coverings in the Fully-Anonymous Shared-Memory Model}

\begin{abstract}
    In the fully-anonymous (shared-memory) model, inspired by a biological setting, processors have no identifiers and memory locations are anonymous, meaning there is no pre-existing agreement among processors on any naming of the memory locations.
    In this work, we ask fundamental questions about the fully-anonymous model in the hope to obtain a better understanding of the role of naming and anonymity in distributed computing.

    First, we ask what it means to solve a task under processor anonymity.
    With tasks such as renaming, the traditional notion obviously does not apply. Instead of restricting ourselves to colorless tasks, we propose using the notion of group solvability, which allows transferring any task to processor-anonymous models.

    Second, the difficulty with anonymity is that processors can hardly avoid covering and then overwriting each other's writes, erasing information written by their predecessors.
    To get to the bottom of this phenomenon, we ask what system configurations are stable when processors keep reading and writing ad infinitum.
    Resolving this question leads us to a wait-free solution to the snapshot task, which then allows us to solve renaming and obstruction-free consensus.
\end{abstract}

\maketitle

\section{Introduction}

Taubenfeld and Raynal~\cite{raynal_mutual_2020} recently introduced the fully-anonymous (shared-memory) model, in which both the processors and the shared-memory are anonymous.
Processors anonymity means that all processors execute exactly the same program and start from the same initial state, except for their possibly different (but not necessarily so) external inputs.
Processors communicate using a number of shared multi-writer multi-reader (MWMR) atomic registers.
Crucially, the registers are also anonymous.
This means that each processor can only address the registers using a private local numbering of the registers, as there is no pre-existing agreement among the processors on their local numbering of the registers.

To make sense of this model, one can think of each processor being wired to the shared registers in an arbitrary way.
Or, inspired by Rashid et al.~\cite{rashidEpigeneticConsensusProblem2021,rashid_genome-wide_2020}, one can think of biological agents physically accessing locations in space without a common frame of reference.

Understanding what synchronization problems are solvable in full anonymity is a fundamental distributed-computing question, and yet not much is known about the fully-anonymous model.

An important difficulty with anonymous processors is that no processor can a priori reserve an exclusive location to write to, and thus processors might overwrite each other and erase important information.
Adding memory anonymity on top of processor anonymity compounds the problem.

For example, if only the processors are anonymous (but memory is not), Guerraoui and Ruppert~\cite{guerraoui_what_2005} show that it is possible to obtain wait-free atomic memory snapshots~\cite{afek_atomic_1993}, which they use to solve consensus with obstruction-free termination.
To keep processors from always overwriting each other, they have processors participate in a race, starting from a common initial position in a one-dimensional array, to be the first to write at a position in the array.
With anonymous memory, there is no way to even define a common starting register for the race or a shared ordering of the registers to race through, and this scheme does not work.

Since processors can hardly avoid overwriting each other's writes, it is not even clear that we can implement just writes that are guaranteed to persist, let alone atomic memory snapshots or consensus even if assuming obstruction-freedom.

To approach anonymity in a principled, tractable way, we start by focusing on the fundamental building blocks of distributed computing, namely \emph{tasks}\footnote{See Herlihy, Kozlov, and Rajsbaum~\cite[Section 8.2]{herlihyDistributedComputingCombinatorial2013} for a formal definition of tasks.}, and by asking what tasks are \emph{wait-free solvable} in the fully-anonymous model.
We expect to make our work easier by restricting ourselves to the well-behaved mathematics of wait-free task solvability instead of attacking ad hoc synchronization problems.

This focus on tasks immediately turns up an interesting observation, namely that the usual notion of task solvability is not adequate in processor-anonymous models.
To see why, consider the classic snapshot \emph{task}. %
In this task, each processor gets a unique identity as input and must return a set of participating input identities including its own and such that the sets returned by the participants are related by containment.
Obviously, in processor-anonymous models, we cannot assume that processors get unique identities.
Prior works~\cite{guerraoui_what_2005,raynal_fully_2021} sidestep this issue by resorting to atomic memory snapshots, i.e.\ snapshots of the contents of the memory at a point in time\footnote{Obtaining atomic snapshots of the memory is not the same as solving the snapshot task. The snapshot task is model-agnostic and does not depend on any notion of memory or registers; it is just about producing sets of participating identifiers related by containment. In a shared-memory implementation of the snapshot task, the outputs may very well not correspond to the contents of the memory at any point in time.}.

While an identity can be thought of as is an input, in processor-anonymous model we lack uniqueness.
This calls for considering groups consisting of the processors with the same input.
What is a reasonable task correctness condition when we have groups?
This question was first asked by Gafni~\cite{gafni_group-solvability_2004} and the resolution there, which we adopt here, is to view a task as referring to groups rather than individual processors and to require that, should we arbitrarily pick one representative processor from each group, the corresponding mapping from groups to outputs satisfy the task specification.
This gives us a definition of group-solving a task.

We now give some examples.
To group-solve the consensus task, it is easy to see that processors must return a unique participating group identifier.
To group-solve the snapshot task, each processor must return a set of group identifiers, including its group and consisting of participating groups, such that, if we pick one processor per group, the sets they return are related by containment.
Interestingly, this allows two processors belonging to the same group to return incomparable sets.
When group-solving renaming, processors in the same group are allowed to share a name, but two processors from different groups cannot.
We can similarly apply the definition to any other classic task, e.g.\ immediate-snapshot, set-consensus, weak symmetry breaking, etc.
See~\Cref{sec:solvability} for more precise definitions and additional examples.

With an adequate notion of task solvability in hand, we next present solutions to 3 tasks never solved before in the fully-anonymous model:
\begin{itemize}
    \item The snapshot task, with wait-free termination.
    \item The adaptive renaming task, with parameter $M(M+1)/2$ and wait-free termination, where $M$ is the number of groups.
    \item The consensus task, with obstruction-free termination.
\end{itemize}
Remarkably, each of those solutions uses only $N$ registers, where $N$ is the number of processors.

The major algorithmic contribution is the solution to the snapshot task, requiring an entirely novel construction.
Once we obtain the snapshot task, the other two follow using algorithms presented in prior work~\cite{chandra_polylog_1996,bar-noy_shared-memory_1989}.
Nevertheless, proving that those algorithms are still correct when substituting a group-solution to the snapshot task --- which, as we have observed, allows processors in the same group to return unrelated snapshots --- instead of atomic memory snapshots, is subtle. %

Finally, the key intuition used to group-solve the snapshot task came by formulating and solving the ``eventual pattern'' problem, which we find is of independent interest.
The question is to characterize the structure formed by set of views can that can be maintained forever, called stable views, when executing a simple write-scan loop forever.
We tackle this question in~\Cref{sec:stable-views} and show that stable views must form a directed acyclic graph with a single source.
This in turn suggests the main idea behind the snapshot algorithm of~\Cref{sec:snapshot}.

\section{The Fully-Anonymous Model}

In this section we describe the fully-anonymous model, which we use in the rest of the paper.
We roughly follow Raynal and Taubenfeld~\cite{raynal_fully_2021}.

We consider a set of $N>1$ processors each identified by a unique number in the range $1..N$.
Each processor runs a deterministic, sequential program, and processors take steps asynchronously, i.e.\ independently of each other.

Processors know $N$ but are otherwise anonymous, meaning that they all run exactly the same program.
In particular, the identifier of a processor does not appear in its program.

Processors communicate using $M>0$ shared multi-reader, multi-writer atomic registers $register[1]$, $register[2]$, \dots, $register[M]$ by issuing read and write instructions that atomically read or write a single register at a time.
However, while processors know there are $M$ registers, there is no pre-existing agreement among the processors about any numbering of the registers.
Instead, for each processor $p$, there is a permutation~$\sigma_p$ of $1..M$, unknown to the processors (including $p$) and fixed arbitrarily at initialization, such that a read or write  instructions by processor $p$ of register number $i$ reads or writes, respectively, register $register[\sigma_p[i]]$.
All registers initially contain a known default value.

When discussing executions, we say that a processor $p$ reads register $i$ when it reads the contents of the register $register[i]$, i.e.\ $p$ executes an instruction to read the register $\sigma_p^{-1}[i]$.
Similarly, we say that a processor $p$ writes register $i$ when it writes the contents of the register $register[i]$, i.e.\ $p$ executes an instruction to write the register $\sigma_p^{-1}[i]$.

The state of each processor consists of three components: a private, read-only input; a private, mutable local state; and a private, write-once output.
A state of the system determines, for each processor, the permutation describing its wiring to the registers (which does not change) and its state, and the contents of each register.

A step is an ordered pair of states, called pre-state and post-state, that is either a read step, a write step, a local computation step, or an output step of a single processor, each according to its program.

An execution $e$ is an infinite sequence of states $\langle e[0], e[1], e[2], \dots\rangle$ such that:
\begin{enumerate}
    \item in the initial state $e[0]$, processor's permutations and inputs are arbitrary, and every processor is in the same, designated initial local state and has an empty output, and all registers are empty;
    \item for each time $t\in Nat$, the ordered pair $\langle e[t],e[t+1]\rangle$ is a step of a processor.
\end{enumerate}
We say that a processor participates in an execution when it takes at least one step.

For every time point $t\in Nat$, we say that $s[t]$ is the state at time $t$, and for every time point $t'\in Nat$ where $t'\geq t$, we say that the state $s[t']$ is reachable after time $t$.
The step occurring at time $t$ is the ordered pair $\langle e[t],e[t+1]\rangle$, and we say that a step $stp$ occurs after time $t$ when there exists $t'\geq t$ such that $stp=\langle e[t'],e[t'+1]\rangle$.

We say that a processor $p$ reads from a processor $q$ at time $t$ if the step occurring at time $t$ is a read step of a register $r[i]$ by $p$ and, at time $t$, register $r[i]$ was last written by processor $q$.
We also say that a processor $p$ reads from a set of processors $Q$ at time $t$ when $p$ reads from a member of $Q$ at time~$t$.

\subsection{Read-Write Wait-Free Synchronization is Impossible with Fewer Than $N$ Registers}
\label{sec:no-less-than-n}

The main difficulty in the fully-anonymous model is that, because processors are wired to the registers arbitrarily, it is hard to avoid processors overwriting each other's writes.
The following easy lower bound illustrates this phenomenon.
This result has been proven for stronger models~\cite{delporte-gallet_linear_2015} and as such subsumes the fully anonymous model, yet it is such a simple observation for the fully anonymous model that we bring it to attention here.

Consider the fully-anonymous model with $N-1$ registers.
Pick a processor $p$ and let $Q=P\setminus \{p\}$.

Let the members of $Q$ run, while $p$ does not take steps, until all members of $Q$ are poised to perform their first write.
Moreover, suppose the wiring of processors to registers is such that each member of $Q$ is poised to write to a different register.
Now let $p$ run solo until it produces an output.
Finally let all the member of $Q$ write, and notice that no information written by $p$ remains in the system.

To the member of $Q$, this execution is indistinguishable from another execution in which $p$ has a different input.
Vice versa, to $p$, this execution is indistinguishable from another execution in which the members of $Q$ have different inputs.
Thus no read-write coordination is possible between $p$ and $Q$.

\section{Defining Task Solvability in the Fully-Anonymous Model}
\label{sec:solvability}

\subsection{Tasks}

Tasks are a class of distributed computing problems that are particularly mathematically well-behaved;
 see for example~\cite{herlihy_distributed_2013}.

In this paper, we will only be interested in tasks where every processor receives its own identifier as input.
Thus, we specify a task $\mathcal{T}$ as a set of outputs $\mathcal{O}$ and a set of output assignments $\Delta$, where an output assignment is a partial function from processors to outputs.

Given an execution in which all participating processors terminate, we define the executions' output assignment in the obvious way.
Then, we say that an algorithm solves a task $T$ when, in every execution $e$ in which all participating processors terminate, the output assignment is in $\Delta$.

We will be particularly interested in the following classic tasks: consensus, snapshot, and adaptive renaming.
\begin{definition}[Consensus]
        In the consensus task, all processors must agree on the identifier of a participating processor.
        Formally, the set of outputs is the set of processor identifiers and the set of valid output assignments is the set of constant partial functions on processor identifiers mapping to their domain of definition.
\end{definition}
\begin{definition}[Snapshot]
        In the snapshot task, each processor $p_i$ outputs a set of participating processor identifiers $o[i]$ such that $p_i\in o[i]$ and such that every pair of outputs are related by containment (for every $i,j$ in the domain of $o$, either $o[i]\subseteq o[j]$ or $o[i]\subseteq o[j]$).
\end{definition}

\begin{definition}[Adaptive Renaming]
        In the adaptive renaming task with parameter $f$, where $f$ is a function on natural numbers, each processor $p_i$ outputs a unique natural number $o[i]$ such that, if $n$ processors participate, the outputs must be in the range $1..f[n]$.
\end{definition}
We will later be interested in renaming with parameter $f[n]=n(n+1)/2$.

\subsection{Group Solvability}
\label{sec:group}

In our formulation of the consensus task, processors must output the identity of a unique processor.
What will the consensus task be in the fully-anonymous model if processors cannot reveal their private identity?
Naturally we can think of giving each processor an input and asking them to agree on a unique input of a participating processor.
We can also think of this as giving processors non-unique identities, i.e.\ assigning them to groups, and then requiring the processors to agree on the identity of a participating group.

Adopting the group view, we could stipulate the snapshot task as follows: each processor $p$ belonging to group $g(p)$ must output a set $S_p$ of (participating) group identifiers, such that $g(p) \in S_p$ and for every $p,q$, $S_p$ and $S_q$ are related by containment (i.e.\ either $S_p\subseteq S_q$ or vice versa).

This informal construction of group versions of tasks cannot be easily generalized to say, the task of renaming.
For renaming, one idea would be to require processors in the same group to output the same name, but that would likely be asking them to solve consensus.
Can we have an interpretation that will encompass both snapshot and renaming?

Gafni~\cite{gafni_group-solvability_2004} proposes an answer which we adopt in this paper.
Gafni's idea is to view a task as referring to groups rather than individual processors, and to require that, should we consider an execution and arbitrarily pick one representative processor from each group, the corresponding mapping from groups to outputs must satisfy the task specification.
This gives us a definition of group-solving any task.
A formal definition appears in~\Cref{def:group-solving} of~\Cref{sec:group-solvabilty-def}.

In the case of renaming, note that this resolves our conundrum above by allowing (but not requiring) processors in the same group to output the same name.

Surprisingly, in the case of the snapshot task, Gafni's definition does not coincide with our first attempt above.
For example:
consider 4 processors $1,2,3,4$ and let the groups be $A= \{1\}$, $B= \{2,3\}$, and $C= \{4\}$.
Let processor 1 output \{A,B,C\}, processor 2 output \{A,B\}, processor 3 output \{B,C\}, and processor 4 output \{A,B,C\}.
This is a legal group solution to the snapshot task according to Gafni's definition, even though the two processors in group $B$ (processors 2 and 3) return incomparable sets.

Previous authors dealing with processor-anonymity missed the group connection, maybe because they thought only about atomic memory snapshots and not the snapshot task.

Finally, let us mention that the snapshot algorithm that we present in~\Cref{sec:snapshot} unintentionally achieves the stronger requirement of ensuring that all outputs are related by containment (as in our initial attempt above).
Nevertheless, we think that Gafni's group solvability is the right notion to use because it allows transferring all tasks, including non-colorless ones, to the processor-anonymous models.
It does however complicate the correctness proofs of the renaming algorithm of \Cref{sec:renaming}.
In fact, it is surprising that this algorithm, borrowed from previous work~\cite{bar-noy_shared-memory_1989}, works at all given a group solution to the snapshot task instead of atomic memory snapshots.

\subsubsection{Group Solvability: Formal Definition}
\label{sec:group-solvabilty-def}

Consider a task $\mathcal{T}=\langle \mathcal{O}, \Delta\rangle$ for a number $N_\mathcal{T}$ of processors, with $N_\mathcal{T}$ possibly different from $N$ the number of processors in the system.

Consider an execution $e$ in which processors receive inputs in the range $1..N_{\mathcal{T}}$.
Define the groups $G_1,\dots,G_{N_\mathcal{T}}$ where $G_i$ is the set of all processors with input $i$.
Note that there are as many groups as there are processors in the definition of $\mathcal{T}$.

We say that $i$ is the identifier of group $G_i$, and we say that a group participates in the execution $e$ if at least one of its members participates in~$e$.

Now define an output sample of $e$ as any function mapping the identifier of each participating group to the output of one of its members.

\begin{definition}
    \label{def:group-solving}
    An algorithm group-solves the task $\mathcal{T}$ when, for every execution $e$ and every output sample $o$, $o$ is a valid output of the task.
\end{definition}

\section{Warm-up: The Eventual Pattern and the Structure of Stable Views}
\label{sec:stable-views}

Towards solving the snapshot task, in this section we consider a scenario in which each processor gets an arbitrary input taken from a set of values $V$ and then indefinitely alternates between writing the set of inputs it knows about, which we call its view, in a shared register, and then reading all the registers one by one, adding the contents of the registers to its view.
We assume that each processor issues writes fairly, i.e.\ it writes all the registers once before writing any register for the second time, etc. %
Pseudocode for the algorithm appears in~\Cref{fig:write-scan}.

\begin{figure}[h]
\centering
\includegraphics[]{./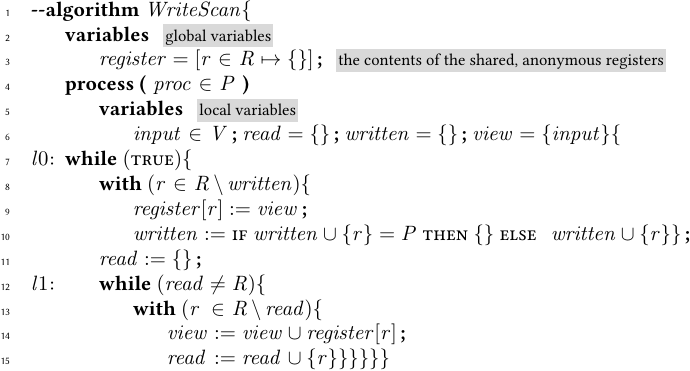}
\caption{The write-scan algorithm in the PlusCal algorithm language. Note that the sequence of steps between any two labels (here the labels are $l0$, line 7, and $l1$, line 12) is executed atomically.}
\label{fig:write-scan}
\end{figure}

When can a processor terminate and declare its view a snapshot?
At first sight, one might say that a processor can terminate when it reads the same set of values in every register.
Unfortunately, this does not work.
Then maybe a double collect will work, i.e.\ reading the same set of values in every register twice in a row?
Neither does this work.

In fact, as we show below, it is possible for a processor $p$ to read the same set of values $V$ in all registers ad infinitum and for another processor $p'$ to also read the same set of values $V'$ in all registers ad infinitum, and yet neither set is a subset of the other.
Moreover, this can happen regardless of the number of registers.

\subsection{A pathological infinite execution}
\label{sec:infinite-exec}

Concretely, first consider 3 processors $p_1$, $p_2$, and $p_3$, each receiving inputs $1$, $2$, and $3$, respectively, in a system with 3 registers, and consider the infinite execution depicted in~\Cref{fig:infinite}.
In this execution, the processor's wiring to the registers (i.e.\ the processors' permutations) are such that the processors keep overwriting each other's writes, as depicted.
Thus, despite taking infinitely many steps, processors $p_2$ and $p_3$'s views --- $\{1,3\}$ and $\{1,2\}$, respectively --- remain incomparable.

Because $p_2$ and $p_3$ keep scanning different sets of values, this does not yet show that e.g.\ requiring a processor to see the same set of values in all registers before outputting a snapshot would not work.
However, now introduce two additional processors $p$ and $p'$, both with input $1$.
Let $p$ read $\{1,2\}$ each time $p_2$ writes it, and when $p$ has to write, let $p$ write $\{1,2\}$ immediately after $p_2$ writes it and to the same register.
Thus all $p$ ever sees is $\{1,2\}$, and $p$ never perturbs the execution of $p_1$, $p_2$, or $p_3$.
Similarly, we can arrange for $p'$ to only ever see $\{1,3\}$ and otherwise not perturb the execution.
We now have our full example: $p$ can take infinitely many steps and only every see $\{1,2\}$ in all registers, while $p'$ can take infinitely many steps and only every see $\{1,3\}$ in all registers.

Additionally, note that adding one more register $r_4$ would not help prevent this type of execution; it would merely add three more overwriting steps to complete the repeating cycle.
Similarly, no additional number of registers would prevent this type of infinite execution.

\begin{figure}
\centering
\begin{tabularx}{\textwidth} {
  | >{\centering}l
  | >{\centering}l
  | >{\centering\arraybackslash}X
  | >{\centering\arraybackslash}X
  | >{\centering\arraybackslash}X
  | >{\centering\arraybackslash}X
  | >{\centering\arraybackslash}X
  | >{\centering\arraybackslash}X | }
 \hline
 & \multirow{2}{2em}{Actions} & \multicolumn{6}{c|}{Post State} \\
 \cline{3-8}
 &                            & $r_1$ & $r_2$ & $r_3$ & $view[p_1]$ & $view[p_2]$ & $view[p_3]$ \\
                           \hline
 1 & \makecell[cl]{$p_1$ writes twice and\\ ends with a scan}
                           & $\{\}$  & $\{1\}$  & $\{1\}$ & $\{1\}$ & $\{2\}$ & $\{3\}$ \\
                           \hline
 2 & $p_2$ writes then scans
                           & \underline{$\{2\}$}  & $\{1\}$  & $\{1\}$ & $\{1\}$ & \underline{$\{1,2\}$} & $\{3\}$ \\
                           \hline
 3 & \makecell[cl]{$p_3$ overwrites $p_2$\\ then scans}
                           & \underline{$\{3\}$}  & $\{1\}$  & $\{1\}$ & $\{1\}$ & $\{1,2\}$ & \underline{$\{1,3\}$} \\
                           \hline
 4 & \makecell[cl]{$p_1$ overwrites $p_3$\\ then scans}
                           & \underline{$\{1\}$}  & $\{1\}$  & $\{1\}$ & $\{1\}$ & $\{1,2\}$ & $\{1,3\}$ \\
                           \hline
 5 & $p_2$ writes then scans
                           & $\{1\}$  & \underline{$\{1,2\}$}  & $\{1\}$ & $\{1\}$ & $\{1,2\}$ & $\{1,3\}$ \\
                           \hline
 6 & \makecell[cl]{$p_3$ overwrites $p_2$\\ then scans}
                           & $\{1\}$  & \underline{$\{1,3\}$}  & $\{1\}$  & $\{1\}$ & $\{1,2\}$ & $\{1,3\}$ \\
                           \hline
 7 & \makecell[cl]{$p_1$ overwrites $p_3$\\ then scans}
                           & $\{1\}$  & \underline{$\{1\}$} & $\{1\}$  & $\{1\}$ & $\{1,2\}$ & $\{1,3\}$ \\
                           \hline
 8 & $p_2$ writes then scans
                           & $\{1\}$  & $\{1\}$  & \underline{$\{1,2\}$} & $\{1\}$ & $\{1,2\}$ & $\{1,3\}$ \\
                           \hline
 9 & \makecell[cl]{$p_3$ overwrites $p_2$\\ then scans}
                           & $\{1\}$  & $\{1\}$  & \underline{$\{1,3\}$} & $\{1\}$ & $\{1,2\}$ & $\{1,3\}$ \\
                           \hline
 10 & \makecell[cl]{$p_1$ overwrites $p_3$\\ then scans}
                           & $\{1\}$  & $\{1\}$  & \underline{$\{1\}$} & $\{1\}$ & $\{1,2\}$ & $\{1,3\}$ \\
                           \hline
 11 & $p_2$ writes then scans
                           & \underline{$\{1,2\}$}  & $\{1\}$  & $\{1\}$ & $\{1\}$ & $\{1,2\}$ & $\{1,3\}$ \\
                           \hline
 12 & \makecell[cl]{$p_3$ overwrites $p_2$\\ then scans}
                           & \underline{$\{1,3\}$}  & $\{1\}$  & $\{1\}$ & $\{1\}$ & $\{1,2\}$ & $\{1,3\}$ \\
                           \hline
 13 & \makecell[cl]{$p_1$ overwrites $p_3$\\ then scans (same as 4)} %
                           & \underline{$\{1\}$}  & $\{1\}$  & $\{1\}$ & $\{1\}$ & $\{1,2\}$ & $\{1,3\}$ \\
                           \hline
\end{tabularx}
\caption{Infinite execution where $p_2$ and $p_3$ keep writing incomparable sets of values.
    Steps 5 to 13 repeat forever after step 13. Each step, changes in the post-state are underlined.}
\label{fig:infinite}
\end{figure}

\subsection{Stable views and the global stabilization time GST}

Obviously, the view of a processor can only grow as it takes more steps.
Thus, eventually, we must get to a point after which no view ever changes anymore.

Say that a processor is live if it takes infinitely many steps, and that a set of processors is live when at least one of its members is live.

\begin{definition}[The global stabilization time, GST]
    Let GST be the earliest time after which all views are stable, all processors that are not live have taken their last step, and all writes by non-live processors have been overwritten by live processors.
\end{definition}

\begin{definition}[Stable view]
    We say that the view of a live processor after GST is a stable view.
\end{definition}
Note that, if a processor just stops taking steps, we do not count its view as stable.

What can we say about stable views?
This question turns out to uncover an interesting structural property of the fully-anonymous model.
Namely, as we show in the next section, that the stable views in an infinite execution always form a directed acyclic graph with a single source.

In~\Cref{sec:snapshot}, we then present a wait-free snapshot algorithm motivated by this single-source DAG structure of stable views\footnote{Hint: processors having the source view can terminate, then a new source will appear, and processors with that new source view can terminate, etc. But the trick will be to find how processors can detect that they have the source view.}.

\subsection{The Eventual Pattern: Stable views form a DAG with a unique source}
\label{sec:dag}

We now consider an infinite execution of the write-scan loop shown in~\Cref{fig:write-scan}. %

\begin{definition} [Stable-View Graph]
    The stable-view graph $G$ is the graph such that:
    \begin{enumerate}
        \item The vertices are the stable views appearing in the execution.
        \item There is an edge from a view $V_1$ to a view $V_2$ when $V_1\subset V_2$.
    \end{enumerate}
\end{definition}
In the example execution of~\Cref{fig:infinite}, the vertices of the stable-view graph are $\{1\}$, $\{1,2\}$, and $\{1,3\}$ and the edges are $\langle \{1\},\{1,2\}\rangle$ and $\langle \{1\},\{1,3\}\rangle$.

In the eventual-pattern question, we ask whether the stable-view graph has any characteristic structure.

Note that, by property of the subset relation $\subset$ (it is transitive and irreflexive), $G$ is obviously a directed acyclic graph.
Additionally, we now show that the stable-view graph has a unique source.
In the example of~\Cref{fig:infinite}, this unique source is the stable view $\{1\}$.

\begin{lemma}
    \label{lem:graph-1}
    For every two stable views $V_1$ and $V_2$, if $V_1\not\subset V_2$ then no live processor that has stable view $V_2$ reads from any processor that has view $V_1$.
\end{lemma}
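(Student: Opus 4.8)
The plan is to prove the contrapositive. Assume a live processor $q$ whose stable view is $V_2$ reads, at some time $t$, from a processor $p$ that has view $V_1$; here ``$p$ has view $V_1$'' must be read as: the register $q$ reads currently holds the value $p$ wrote when $p$'s view was $V_1$, so the read returns exactly the set $V_1$. I will show $V_1\subseteq V_2$. Since the vertices of $G$ are \emph{distinct} stable views, this gives the lemma as stated for $V_1\neq V_2$ (and, under the convention that $\subset$ denotes $\subseteq$, it is literally the contrapositive of the lemma).

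The argument rests on two immediate structural facts about the write-scan loop of \Cref{fig:write-scan}. (i) \emph{Monotonicity}: a processor's view never shrinks, since each iteration only unions newly-read register contents into it; hence for any processor, the view at any time is contained in the view at every later time, and for a live processor the view is constant after $\mathrm{GST}$ and equal to its stable view. (ii) \emph{Absorption}: in the scan step in which $q$ reads the value $V_1$, $q$ unions $V_1$ into its view, so immediately afterwards $q$'s view contains $V_1$. Combining: by (ii), just after time $t$, $q$'s view contains $V_1$; by (i), $q$'s stable view $V_2$ therefore also contains $V_1$. Thus $V_1\subseteq V_2$, as desired. Note this reasoning is uniform in whether $t$ occurs before or after $\mathrm{GST}$: monotonicity alone carries the inclusion forward to the stable view, so we never need to argue separately about when $q$'s view stabilizes, nor about the liveness of $p$.

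The one point requiring care — and the only place the proof can go wrong — is the interpretation of ``a processor that has view $V_1$''. The relevant $V_1$ is $p$'s view \emph{at the moment of the write that $q$ reads} (equivalently, the value presently stored in that register), not $p$'s eventual stable view; with the latter reading the claim would fail, because $q$ could read an early, still-small write by $p$. Once this is fixed, there is no case analysis and no computation: the lemma follows from monotonicity plus absorption and the definition of stable view.
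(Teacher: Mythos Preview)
Your proof is correct and follows essentially the same approach as the paper: both arguments use that a read incorporates the register's contents into the reader's view and that views are monotone, so a read of $V_1$ forces $V_1\subseteq V_2$. Your version is slightly more careful in spelling out the interpretation of ``has view $V_1$'' and the $\subset$ versus $\subseteq$ issue, but the underlying idea is identical.
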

\begin{proof}
    Suppose towards a contradiction that $V_1\not\subset V_2$ and a live processor $p$ with stable view $V_2$ reads at a time $t$ from a processor that has view $V_1$.
    Then, by the algorithm, $p$'s view at time $t+1$ is a superset of $V_1$.
    Moreover, since views only grow as processors take steps, we must have $V_1\subseteq V_2$, which is a contradiction.
\end{proof}

\begin{lemma}
    \label{lem:graph-2}
    Consider a nonempty set of processors $A$ that is live and a time $t$ such that, after time $t$, every read of every member of $A$ reads a register last written by a member of $A$.
    Then, at all times after time $t$, the set of registers last written by $\overline{A}$ is of cardinality at most equal to the cardinality of $A$.
\end{lemma}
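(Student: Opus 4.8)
The plan is to fix an arbitrary time $t'\ge t$, let $R$ be the set of registers that are last written by $\overline{A}$ at time $t'$, and construct an injection $R\hookrightarrow A$; this gives $|R|\le|A|$ immediately, and since $t'$ is arbitrary it proves the lemma. For each register $i\in R$ I will single out a canonical ``responsible'' processor $b_i\in A$, and the heart of the argument is to show that distinct registers get distinct responsible processors.

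First I would argue that every register $i\in R$ is written by some member of $A$ at some time after $t'$. Since $A$ is live, fix a live member $a^\ast\in A$; it completes infinitely many iterations of the write-scan loop, each of which scans all $M$ registers, so $a^\ast$ reads register $i$ at some time $s>t'$. As $s>t$, the hypothesis forces register $i$ to be last written by a member of $A$ at time $s$, whereas $i\in R$ means it is last written by $\overline{A}$ at time $t'$; since the last writer of a register changes only when the register is written, some member of $A$ must write register $i$ at a time strictly between $t'$ and $s$. Hence I may define $\tau_i$ to be the first time after $t'$ at which a member of $A$ writes register $i$, and $b_i\in A$ to be the processor that performs that write.

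The key step is that $i\mapsto b_i$ is injective. Suppose not: $b_i=b_j=:b$ for two distinct registers $i,j\in R$, say with $\tau_i<\tau_j$. The write at $\tau_i$ happens in some iteration $m$ of $b$'s write-scan loop, and the write at $\tau_j$ in a strictly later iteration $m'>m$ (strictly later since $b$ writes exactly one register per iteration and $i\ne j$). Between the write at $\tau_i$ and the write at $\tau_j$, processor $b$ performs at least one complete scan (the one following its write in iteration $m$, which finishes before the write of iteration $m+1$ and hence before $\tau_j$); during that scan $b$ reads register $j$ at some time $s$ with $\tau_i<s<\tau_j$. Since $s>t$, the hypothesis forces register $j$ to be last written by a member of $A$ at time $s$; combined with $j\in R$ (last written by $\overline{A}$ at $t'$) and $s>t'$, this means some member of $A$ writes register $j$ at a time strictly between $t'$ and $s\,(<\tau_j)$ --- contradicting the minimality of $\tau_j$. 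Therefore the map is injective, and we conclude $|R|\le|A|$.

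I expect the main obstacle to be the bookkeeping in the injectivity step: one must be careful to locate a \emph{complete} scan by $b$ lying strictly between $\tau_i$ and $\tau_j$ and reading register $j$, which is precisely where the structure of the write-scan loop (a full scan between consecutive writes, issued fairly) is used. Everything else --- the existence of $\tau_i$, and the passage from ``last written by $\overline{A}$ at $t'$'' plus ``last written by $A$ later'' to ``an $A$-write occurred in between'' --- is routine reasoning about how the last-writer of a register evolves in time.
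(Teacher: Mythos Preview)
Your proof is correct. The argument is different in organisation from the paper's, though it rests on the same structural fact about the write--scan loop (a full scan separates any two writes by the same processor).

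The paper argues by contradiction: assuming that at some time $t_2\ge t$ at least $|A|+1$ registers are last written by $\overline{A}$, it picks the \emph{first} processor $p\in A$ to complete a scan starting after $t_2$, observes that members of $A$ can have performed at most $|A|$ writes by the time $p$'s scan ends (any processor writing twice would have squeezed in an earlier completed scan), and concludes by pigeonhole that $p$ must have read some register still last written by~$\overline{A}$, contradicting the hypothesis. Your approach instead builds an explicit injection $R\hookrightarrow A$ by sending each $\overline{A}$-register to its \emph{first} $A$-writer after~$t'$, and uses the scan-between-writes fact to rule out collisions. The paper's ``first to finish a scan'' trick yields a short global counting argument; your construction is more local and gives a concrete matching, which some readers may find more transparent. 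Either way the bound and the reliance on the loop structure are identical.
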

\begin{proof}
    Suppose towards a contradiction that there is a time $t_2\geq t$ such that, at time $t_2$, the set of registers last written by members of $\overline{A}$ is of cardinality at least $|A|+1$.
    Consider the first processor $p$ in $A$ that completes a full scan that starts at a time $t_3\geq t_2$ and let $t_4$ be the time of $p$'s last read in this scan.
    Note that, since $A$ contains at leat one live processor, there is such a processor $p$.

    Because each processor writes exactly once in between each scan, there are at most $|A|$ writes by members of $A$ between times $t_2$ and $t_4$.
    However, we have assumed that, at time $t_2$, there are at least $|A|+1$ registers last written by members of $\overline{A}$.
    Thus, there is at least one register $r$ such that, at every time between $t_2$ and $t_4$, $r$ was last written by a member of $\overline{A}$.

    Since $p$ reads all the registers between times $t_3$ and $t_4$ and $t\leq t_2\leq t_3$, we conclude that $p$ must read $r$, a register last written by $\overline{A}$, at a time $t'\geq t$.
    This contradicts our assumptions.

\end{proof}

\begin{lemma}
    \label{lem:graph-3}
    If, at all times after a time $t$, there are at most $|A|$ registers last written by $\overline{A}$ and $\overline{A}$ is live, then at least one member of $\overline{A}$ eventually reads from $A$.
\end{lemma}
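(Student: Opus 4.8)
The plan is to argue by contradiction. Assume that no member of $\overline{A}$ ever reads from a member of $A$ at any time after $t$; I will derive that, on the contrary, some member of $\overline{A}$ is forced to read from $A$, mirroring the argument behind \Cref{lem:graph-2}. The first thing to dispose of is never-written registers: since $\overline{A}$ is live it has a member taking infinitely many steps, and by write-fairness that member writes each of the $M$ registers within finitely many of its steps; hence there is a time $t_0\geq t$ after which every register is last-written by some processor. Consequently, after $t_0$ the set $R_A$ of registers last-written by a member of $A$ and the set $R_{\overline{A}}$ of registers last-written by a member of $\overline{A}$ partition all $M$ registers, and with the hypothesis $|R_{\overline{A}}|\leq|A|$ this gives $|R_A|\geq M-|A|$ at all times after $t_0$.

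One clean way to finish is to note that, after $t_0$, the contradiction hypothesis forces every read of every member of $\overline{A}$ to read a register last-written by a member of $\overline{A}$ (it is last-written by someone, and not by $A$); so \Cref{lem:graph-2} applies with $\overline{A}$ in the role of ``$A$'' and $t_0$ in the role of ``$t$'', giving $|R_A|\leq|\overline{A}|$ at all times after $t_0$. Combined with $|R_{\overline{A}}|\leq|A|$ and $|R_A|+|R_{\overline{A}}|=M$ this forces $M\leq|A|+|\overline{A}|=N$, a contradiction as long as $M>N$. To also cover the tight case $M=N$, I would instead re-run the proof of \Cref{lem:graph-2} directly: let $q\in\overline{A}$ be the first member of $\overline{A}$ to complete a full scan that starts at a time $t_3\geq t_0$, and let $t_4$ be the time it completes this scan (such a scan exists because $\overline{A}$ has a live member). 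The ``first to complete'' argument shows no member of $\overline{A}$ writes twice during $[t_0,t_4]$, and since $q$ performs no write during its own scan $[t_3,t_4]$, at most $|\overline{A}|-1$ writes by members of $\overline{A}$ occur during $[t_3,t_4]$. A register in $R_A$ at time $t_3$ leaves $R_A$ only when a member of $\overline{A}$ overwrites it, so at least $(M-|A|)-(|\overline{A}|-1)=M-N+1\geq 1$ registers stay last-written by a member of $A$ throughout $[t_3,t_4]$. Fix such a register $r$; since $q$ reads every register during $[t_3,t_4]$, it reads $r$ while $r$ is last-written by a member of $A$, so $q$ reads from $A$ at a time after $t$ --- the desired contradiction.

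The step I expect to be most delicate is this final count, and especially the gain of the ``$-1$'' in ``$|\overline{A}|-1$ writes'', which comes from the scanning processor not writing during its own scan and is exactly what makes the inequality survive when $M=N$. The argument does use $M\geq N$, and this is genuinely needed: for $M<N$ one can keep all of $A$ idle, so that no register is ever last-written by $A$ and hence no member of $\overline{A}$ reads from $A$, while the hypothesis $|R_{\overline{A}}|\leq|A|$ can still hold. The treatment of never-written registers via $t_0$ is the remaining bookkeeping point but is routine.
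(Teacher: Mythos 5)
Your proof is correct, and it is organized somewhat differently from the paper's. The paper splits on whether, at time $t$, the number of registers last written by $\overline{A}$ is strictly below $|A|$ or exactly $|A|$: the strict case is dispatched by the same ``first member of $\overline{A}$ to complete a scan after $t$'' count that you use, while the tight case is handled by noting that the set of $\overline{A}$-written registers cannot grow, so it either stays fixed forever (leaving some register permanently $A$-written, which a live member of $\overline{A}$ eventually reads) or shrinks, reducing to the strict case. Your single uniform argument avoids this case split by sharpening the count: since the scanning processor $q$ writes nothing during its own scan $[t_3,t_4]$, only $|\overline{A}|-1$ writes by members of $\overline{A}$ can disturb the at least $M-|A|$ registers that are $A$-written at $t_3$, and this survives even the tight case $M=N$ with $|R_{\overline{A}}|=|A|$. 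You also make explicit two points the paper leaves implicit: the disposal of never-written registers via $t_0$ (the paper's ``must read a register last written by $A$'' silently assumes every register has a last writer), and the dependence on having at least $N$ registers (the paper's ``$<N$'' count presupposes this; in the intended setting of the snapshot algorithm $M=N$). Your first route, applying \Cref{lem:graph-2} with the roles of $A$ and $\overline{A}$ swapped, is indeed conclusive only for $M>N$, and you rightly fall back on the direct count. One small caveat on a side remark: your ``genuinely needed'' counterexample with $A$ idle only works when $M\leq|A|$; if $|A|<M<N$, a live member of $\overline{A}$ eventually writes every register, so keeping $A$ idle violates the hypothesis that at most $|A|$ registers are last written by $\overline{A}$, and failure of the lemma in that regime would require a different construction.
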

\begin{proof}
    Consider the set $R^{\overline{A}}_t$ of registers that, at time $t$, were last written by $\overline{A}$.

    First, suppose that $|R^{\overline{A}}_t|< |A|$.
    Then, the first time a member $p$ of $\overline{A}$ completes a scan that started after $t$, members of $\overline{A}$ wrote at most $|\overline{A}|$ times.
    This means that, at all times during $p$'s scan, at most $|R^{\overline{A}}_t|+|\overline{A}|< N$ registers were last written by members of $\overline{A}$.
    Thus, during its scan, $p$ must read a register last written by $A$, and we are done.

    Second, suppose that $|R^{\overline{A}}_t| = |A|$.
    If the set of registers last written by $\overline{A}$ never changes after $t$, then, trivially, there is a register $r$ such that, at all times after $t$, $r$ is last written by a member of $A$.
    Since all members of $\overline{A}$ eventually read all registers, we are done.

    Now suppose that the set of registers last written by $\overline{A}$ changes at some time $t'> t$, and let $R_{t'}$ be the set of registers that, at time $t'$, were last written by $\overline{A}$.
    Since we have assumed that there are at most $|A|$ registers last written by $\overline{A}$, the set of registers last written by $\overline{A}$ cannot grow, and thus it must shrink.
    Thus we have $|R_{t'}|< |A|$ and we can reason as in the first case above and we are done.
\end{proof}

Taken together, \Cref{lem:graph-2,lem:graph-3} show that it is not possible to have two disjoint sets of processors $A$ and $B$ that forever take steps and yet no member of $A$ ever reads from a member of $B$ and vice versa.
Formally, this is expressed in the following lemma, which follows directly from~\Cref{lem:graph-2,lem:graph-3}:
\begin{lemma}
    \label{lem:graph-key}
    Consider a nonempty set processors $A$ such that $A$ is live and a time $t$ such that, after time $t$, every read of every member of $A$ reads a register last written by a member of $A$.
    Assume that at least one member of $\overline{A}$ is live.
    Then, at least one member of $\overline{A}$ eventually reads from $A$. %
\end{lemma}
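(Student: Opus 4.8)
The plan is to observe that \Cref{lem:graph-key} is almost immediate once we have \Cref{lem:graph-2,lem:graph-3} in hand, so the proof is really just a matter of chaining them correctly and checking that the hypotheses line up. I would start by taking $A$ as in the statement: nonempty, live, and with a time $t$ after which every read by a member of $A$ reads a register last written by a member of $A$. I would also fix the assumption that at least one member of $\overline{A}$ is live, i.e.\ that $\overline{A}$ is live in the sense defined just before \Cref{def:group-solving}'s section.

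First I would feed $A$ and $t$ into \Cref{lem:graph-2}. Its hypotheses are exactly what we have assumed, so the conclusion gives us: at all times after $t$, the set of registers last written by $\overline{A}$ has cardinality at most $|A|$. Next I would feed this conclusion, together with the liveness of $\overline{A}$, into \Cref{lem:graph-3}, whose hypotheses are precisely ``at all times after some time $t$, there are at most $|A|$ registers last written by $\overline{A}$'' and ``$\overline{A}$ is live''. Its conclusion is that at least one member of $\overline{A}$ eventually reads from $A$, which is exactly what \Cref{lem:graph-key} asserts. So the proof is the two-line composition: apply \Cref{lem:graph-2} to get the cardinality bound, then apply \Cref{lem:graph-3} to get the desired read.

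The only thing that requires a moment of care — and hence the closest thing to an obstacle — is making sure the time parameter matches between the two lemmas: \Cref{lem:graph-2} delivers the cardinality bound ``at all times after time $t$'' for the same $t$ we were handed, and \Cref{lem:graph-3} wants ``at all times after a time $t$'' for some time; these agree, so we can literally reuse $t$. I would also note in passing that \Cref{lem:graph-2} requires $A$ nonempty (so that $|A|\ge 1$ and there is in fact something doing the reading), which is given. Beyond that there is nothing to grind through; the lemma is stated precisely so that it ``follows directly'' from the two preceding ones, and I would write the proof in exactly that spirit — essentially one sentence invoking \Cref{lem:graph-2} and one invoking \Cref{lem:graph-3}.
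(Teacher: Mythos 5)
Your proof is correct and matches the paper exactly: the paper states that \Cref{lem:graph-key} ``follows directly from'' \Cref{lem:graph-2,lem:graph-3}, and your two-step chaining (the cardinality bound from \Cref{lem:graph-2} fed, together with the liveness of $\overline{A}$, into \Cref{lem:graph-3}) is precisely that argument, with the time parameter correctly reused.
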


\begin{theorem}
    \label{thm:single-source}
    The stable-view graph is a directed acyclic graph with a unique source.
\end{theorem}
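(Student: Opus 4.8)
The acyclicity is immediate, as already observed: $\subset$ is irreflexive and transitive, so $G$ has no cycle, and since every view is a subset of the finite set of possible input values there are only finitely many vertices, so $G$ has at least one source. The whole content is therefore the uniqueness of the source, which I would establish by contradiction, combining \Cref{lem:graph-2,lem:graph-key} with a counting argument on registers.

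Suppose $V_1$ and $V_2$ are two distinct sources. Being minimal under $\subset$ and distinct, they are incomparable. Let $A$ be the set of live processors whose stable view is $V_1$ and $B$ the set of live processors whose stable view is $V_2$: each is nonempty (a source is, by definition, the stable view of some live processor), they are disjoint, and both are live. Next I would fix a time $t^\star\ge\mathrm{GST}$ large enough that every register has been written at least once after $\mathrm{GST}$ — such a time exists because any single live processor writes every register infinitely often. Then at every time after $t^\star$: every register has a last writer; that last writer is a live processor (after $\mathrm{GST}$ the non-live processors take no steps and all their writes have been overwritten); and the register's content is exactly the stable view of that last writer (after $\mathrm{GST}$ a live processor only ever writes its own, now stable, view).

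The key local claim is that after $t^\star$ every read of every member of $A$ reads a register last written by a member of $A$, and symmetrically for $B$. Indeed, if $p\in A$ reads at a time $t>t^\star$ a register last written by $q$, then by the algorithm $p$'s view becomes a superset of that register's content, namely $q$'s stable view $V_q$; but $p$'s view is $V_1$ and is already stable, so $V_q\subseteq V_1$, hence $V_q=V_1$ by minimality of $V_1$, so $q\in A$. I would then apply \Cref{lem:graph-2} to $A$ (with time $t^\star$): at all times after $t^\star$ at most $|A|$ registers are last written by $\overline A$, hence at least $N-|A|$ registers are last written by members of $A$; symmetrically, at least $N-|B|$ registers are last written by members of $B$. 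Since $A\cap B=\emptyset$ these two sets of registers are disjoint, so $(N-|A|)+(N-|B|)\le N$, i.e.\ $|A|+|B|\ge N$; combined with disjointness this forces $|A|+|B|=N$, so $A$ and $B$ together exhaust all processors and $\overline A=B$. Finally, $A$ is live, after $t^\star$ every member of $A$ reads only from $A$, and $\overline A=B$ is live, so \Cref{lem:graph-key} yields that some member of $\overline A=B$ eventually reads from $A$ — contradicting the claim, shown above, that after $t^\star$ members of $B$ read only from $B$. Hence the source is unique.

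I expect the counting step to be the main obstacle, and the reason the uniqueness is not a one-line consequence of \Cref{lem:graph-key}: applying \Cref{lem:graph-key} to $A$ alone only produces a processor outside $A$ reading from $A$, and such a processor may perfectly well have a stable view strictly above $V_1$, so nothing is contradicted; one genuinely needs the symmetric fact ``$B$ reads only from $B$'' together with the register count to pin down $\overline A=B$ before \Cref{lem:graph-key} bites. A secondary point needing care is the passage from $\mathrm{GST}$ to $t^\star$: the definition of $\mathrm{GST}$ only guarantees that live processors' views have stabilized, not that every register already holds a stable view, so the strengthening to $t^\star$ is what makes ``reading a register last written by $q$ returns exactly $V_q$'' valid.
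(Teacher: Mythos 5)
Your proof is correct, but it reaches uniqueness by a genuinely different route than the paper. The paper fixes one source $S$, takes $X$ to be the set of stable views reachable from $S$ and $P_X$ the live processors whose stable view lies in $X$, argues via the contrapositive of \Cref{lem:graph-1} that after GST members of $P_X$ read only from $P_X$, and then applies \Cref{lem:graph-key} once: some processor with view in $\overline{X}$ reads from $P_X$, so its view contains a view in $X$, hence lies in $X$ --- contradiction. You instead work with both sources at once: $A$ and $B$ are the holders of the two (minimal, incomparable) source views, minimality gives that each set reads only from itself after your time $t^\star$, you then invoke \Cref{lem:graph-2} explicitly to count registers ($\geq N-|A|$ last written by $A$, $\geq N-|B|$ by $B$, disjoint), deduce $|A|+|B|=N$ and hence $\overline{A}=B$, and only then let \Cref{lem:graph-key} contradict ``$B$ reads only from $B$''. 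Your version is somewhat longer but arguably tighter on two points: first, it uses only minimality of the source views, so it avoids the step the paper glosses over --- \Cref{lem:graph-1} only bounds the writer's view from above by the reader's view, so a member of $P_X$ could a priori read from a stable view not containing $S$, i.e.\ outside $X$, and the paper's claim ``every read of $p\in P_X$ reads from $P_X$'' needs more than the cited contrapositive; second, your strengthening from GST to $t^\star$ (every register's last write is post-GST, so its content equals its last writer's stable view) is a real prerequisite for the ``reads only from $A$'' claims that the paper leaves implicit. What the paper's route buys is brevity: one source, one application of \Cref{lem:graph-key}, and no explicit register count outside the lemmas. Your closing observation about why \Cref{lem:graph-key} applied to $A$ alone cannot finish the argument is exactly right and is precisely the gap your counting step fills.
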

\begin{proof}
    Remember that, intuitively, \Cref{lem:graph-key} says that there cannot be two disjoint, live sets of processors where neither reads from the other.
    Now observe that, if the stable-view graph has two sources, then we can easily find two such disjoint sets, which is a contradiction.
    We now make a more formal argument.

    Suppose towards a contradiction that there are at least two sources in the graph, and consider a source $S$.
    Consider the nonempty set $X$ of views that are reachable from $S$ in the graph, and its nonempty complement among the graph's vertices $\overline{X}$.
    Let $P_X$ be the non-empty set of live processors whose stable view is in $X$.
    By the contrapositive of~\Cref{lem:graph-1}, after GST, for every processor $p\in P_X$, every read of $p$ reads from $P_X$. %
    Thus, by~\Cref{lem:graph-key}, there must a member $p$ of $\overline{X}$ that reads from $X$ at some time after GST.
    Thus, $p$'s stable view must be a superset of a view in $X$.
    This implies that a view in $\overline{X}$ is reachable from $X$, which in turns implies it is reachable from the source $S$, which by definition of~$\overline{X}$ is a contradiction.
\end{proof}

\section{A Solution to the Snapshot Task}
\label{sec:snapshot}

\subsection{From stable views to the snapshot algorithm}
\label{sec:dag-to-snapshot}

In~\Cref{sec:dag}, we have seen that the stable views in an infinite execution of the write-scan loop in~\Cref{fig:snapshot} form a directed acyclic graph with a single source.
The rough idea behind the snapshot algorithm that we present in this section is to let processors that have the source view output the view as snapshot, knowing that all other processors have a larger view.
This will then give rise to a new stable-view graph, and the new source can output, etc. until all processors have terminated.

But how does a processor detect that it has a view that can be maintained forever, and that, should the view be maintained forever, it would be the source of the stable-view graph?
Moreover, what if other processors have not reached their stable views yet?

In short, the stable-view graph describes infinite executions, but to solve snapshot wait-free we care only about finite executions.
So, while the results of~\Cref{sec:stable-views} allowed us to build some intuition, they are not sufficient to obtain a wait-free snapshot algorithm.

Let us turn again to~\Cref{sec:stable-views} but for a different intuition.
Consider the infinite execution with 5 processors described in~\Cref{sec:infinite-exec} and partially depicted in~\Cref{fig:infinite}.
Note that $p$ and $p'$, which keep reading the same incomparable sets ad infinitum, are able to do so because of the churning activity of $p_1$, $p_2$, and $p_3$ which produce the incomparable sets read by $p$ and $p$'. %

How can we prevent processors $p_1$, $p_2$, and $p_3$ from churning out incomparable sets ad infinitum?
Notice that, in~\Cref{fig:infinite}, $p_2$ and $p_3$ never complete a scan reading the same set in all registers.
So, if $p_2$ and $p_3$ would write something along with their view to warn $p$ and $p'$ of this fact, then $p$ and $p'$ could detect that something is wrong and not terminate yet.
This observation gives us a toehold on the solution.

More generally, here is the idea to prevent groups of processors from churning out incomparable sets as in~\Cref{fig:infinite}: %
Each processor $p$ assigns itself a level, initially $0$, and, each scan, $p$ sets its level to $0$ if it reads a set different from its view, and otherwise sets its level to the minimum level read in the scan plus 1.
Finally, a processor outputs a snapshot only when reaching level $N$.

Now consider a processor $p$ that outputs a set $W$.
If we ask who $p$ read $W$ from at level $N-1$, and in turn who those read $W$ from at level $N-2$, and in turn who those read $W$ from at level $N-3$, etc. until level $0$, it is not possible to add a new processor at each level and still have processors left when reaching level $0$.
Thus we obtain a set of processors $Q$ with $p\in Q$ and such that each member of $Q$ read $W$ everywhere from members of $Q$.
This prevents the situation in~\Cref{fig:infinite}, where, to keep churning incomparable sets, $p_2$ and $p_3$ cannot read the same set in all registers, and $p$ and $p'$ must read from them.

Finally, the fact that stable views form a directed acyclic graph allow us to establish termination, as the processors with a source view must inevitably reach level $N$ and terminate.
In the example of~\Cref{fig:infinite}, $p_1$ always reads $\{1\}$ --- the source stable view --- from itself; thus, if it tracked its level as above, $p_1$ would increase its level at each read and eventually terminate returning snapshot $\{1\}$; this would break the infinitely repeating pattern.

This is so far just an informal argument, but we hope it helps the reader build some intuition.
In the next section, we present the snapshot algorithm precisely and provide a rigorous proof.

\subsection{The snapshot algorithm}

In this subsection, we describe the snapshot algorithm, a wait-free implementation of the snapshot task in the fully-anonymous model.
\Cref{fig:snapshot} gives a specification in the PlusCal language~\cite{lamport_pluscal_2009}.

Processors communicate using $N$ shared registers, each initially a record with two components: an empty view $register[r].view$ and a level $register[r].level$ of 0 (line 4).
To simplify the notation, if $W$ is a set of values, we write $W\subseteq r$ for $W\subseteq register[r].view$.

Each processor $p$ has an input and maintains a set of values, called its view and initialized to the singleton containing $p$'s own input value, and a level between $0$ and $N$, initialized to $0$ (lines 7 to 9; the other variables are implementation details).
Looking ahead, we will see that $p$'s view contains the input values of other processors that $p$ knows of (and thus it never decreases).

After initialization, $p$ alternates between a writing phase and a scanning phase.
In the writing phase (lines 12 to 16) $p$ picks a register that it has not written to since it last wrote all the registers and writes its views and its level in the register.

In the scanning phase (lines 17 to 24), $p$ first reads all the registers one by one (lines 17 to 19).
Then, if $p$ read its own view in all registers, then $p$ sets its level to the minimum level read in the registers plus one; otherwise, $p$ resets its level to 0.
Finally, $p$ adds all the values it read to its view.

Processor $p$ repeats the write-scan loop until it reaches level $N$, at which point it terminates and outputs its view as snapshot\footnote{In fact, reaching level $N-1$ is sufficient, but the correctness proof is easier with level $N$.}.

\begin{figure}
\centering
\includegraphics[]{./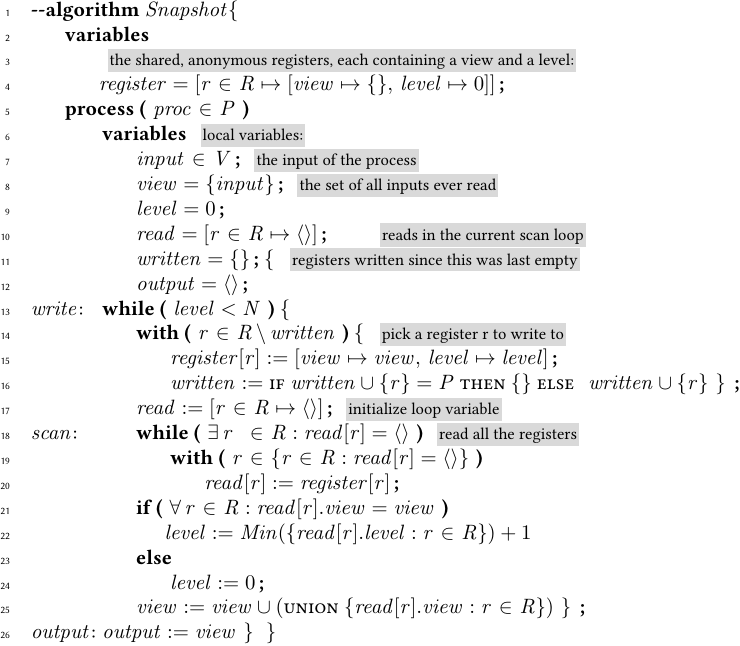}
\caption{The snapshot algorithm in the PlusCal~\cite{lamport_pluscal_2009} algorithm language.
    Each processor outputs its view when it terminates upon reaching level $N$.
    Note that the sequence of steps between any two labels (here the labels are $write$, line 12, and $scan$, line 17) is executed atomically.
    The TLC model-checker~\cite{lamport_specifying_2002} is able to exhaustively explore all 3-processor executions of this algorithm, and it confirms that the algorithm solves the snapshot task wait-free.}
\label{fig:snapshot}
\end{figure}

\subsection{Correctness}

\subsubsection{An informal operational argument of safety}

Let $p$ be the first processor to scan and obtain level $N$ with view $W$.
We inductively build a tree of processors layer by layer.
At the root of the tree is $p$, and edges to the next layer containing all the processors it
read from with level $N-1$, and continue inductively until the next layer does not strictly increase the 
cardinality of the set of processors residing in the tree.
Let this set be called $S_{p,N}$.

We now look at the instant just before the step $p$ does its first read in its last scan.
Consider processors in $S_{p,N}$.
If it contains a processor that only reads $W$ while $p$ does its scan, we repeat the argument replacing $p$ with the first $q$ that scans and obtains level $N-1$.
So now we assume an $i$ such that all of $S_{p,i}$ have $W$ when $p$ starts its scan.
The claim is that at the configuration before $p$'s first read there are at most $|\overline{S_{p,i}}|-1$ registers with values $W'\not\subseteq W$.
In this case, since $S_{p,i}$ is a lower bound to the number of processors that have $W$, it is easy to see that the rest of the processors with value different than $p$ cannot overwrite all of the registers containing $p$ value.

To see the claim is true: if the condition is violated and registers with $W'\not\subseteq W$ are numerous enough that $W$ can be evicted from all registers, then one of the processors in $S_{p,i} \setminus \{p\}$ must overwrite two registers while $p$ is scanning.
If we take the first to write twice, within $p$'s scan, it must read an non-$p$ value register --- contradiction.

\subsubsection{A proof of safety by induction}

We now rigorously prove that the algorithm of~\Cref{fig:snapshot} guarantees that every two output views are related by containment.
Note that this guarantee is stronger than what group-solving the snapshot task requires because, by the group-solvability definition, processors in the same group (i.e.\ with the same input) are allowed to return sets not related by containment.
We start by defining a notion that is central to the proof:
\begin{definition}[Set of values durably stored despite interference by a set of processors]
    Consider the system at a time $t$, a set of values $W$, a set of processors $Q$, and the following derived sets:
    \begin{itemize}
        \item $R_W\subseteq R$, the set of registers $r$ such that $W\subseteq r$.
        \item $Q_W\subseteq Q$, the set of processors $q\in Q$ such that either $W$ is a subset of $q$'s view or $q$ is scanning and has not yet read any register in $R_W$.
    \end{itemize}
    We say that, at time $t$, $W$ is durably stored despite interference by $Q$ when $|R_W| > |Q\setminus Q_W|$.
\end{definition}

Intuitively, if $W$ is durably stored despite interference by $Q$ and only members of $Q$ write, then only the members of $Q\setminus Q_W$ can overwrite the registers containing $W$ before they have to scan.
However, the condition $|R_W| > |Q\setminus Q_W|$ implies that the members of $Q\setminus Q_W$ are not numerous enough to overwrite all of $R_W$ before they have to scan.
Thus, if only the members of $Q$ write, then at least one register will always contain $W$.
In particular, taking $Q=P$ we have:
\begin{lemma}
    If, at a time $t$, $W$ is durably stored despite interference by $P$, then every processor $p$ that takes a step after $t$ and then terminates, terminates with $W\subseteq view[p]$.
\end{lemma}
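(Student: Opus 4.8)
The plan is to show that the predicate ``$W$ is durably stored despite interference by $P$'' is an invariant, i.e. once it holds at time $t$ it continues to hold at every later time, and that it forces every terminating processor to have $W$ in its view. I would first prove the invariance claim as the technical heart of the argument, then derive the stated lemma as an easy corollary. Fix $t$ and suppose $|R_W| > |P \setminus P_W|$ at time $t$. I will argue that for the single step $\langle e[t], e[t+1]\rangle$ of some processor $q$, the quantity $|R_W| - |P \setminus P_W|$ does not decrease (or, more precisely, that $|R_W| > |P\setminus P_W|$ is preserved). This is a case analysis on the step type.

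The key steps, in order: (1) \emph{Local computation, output, and read steps.} A read step may move $q$ out of $P_W$ only in a way that simultaneously adds to $R_W$ or keeps the invariant — actually a read does not change any register, so $|R_W|$ is unchanged; and a read only \emph{adds} values to a view at the end of a scan, so membership in $P_W$ can only be gained, never lost, by reads (note: while $q$ is mid-scan it is in $P_W$ precisely until it first reads a register of $R_W$; reading such a register puts $W \subseteq view[q]$ into its view by the time the scan completes, so we must be slightly careful and track the ``mid-scan, hasn't hit $R_W$ yet'' clause — the definition of $P_W$ is crafted exactly so that $q$ leaves that clause only by entering the ``$W \subseteq view[q]$'' clause). (2) \emph{Write steps.} Here is the only way $R_W$ can shrink: $q \in P \setminus P_W$ overwrites a register $r \in R_W$ with a view not containing $W$, decreasing $|R_W|$ by one. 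But then I must show $|P \setminus P_W|$ also drops by one, or was already small enough. The mechanism: after writing, $q$ will (by the fair write-then-scan structure) eventually scan, and — this is where the informal ``must read a non-$p$ register'' argument gets formalized — a processor in $P \setminus P_W$ that overwrites a register of $R_W$ does so at most once before its next scan, and during that scan it is forced to read some register still in $R_W$ (because, by the invariant, $R_W$ is never fully evicted by the too-few members of $P\setminus P_W$), which puts $W$ into its view, moving it into $P_W$. Making the counting line up step-by-step is the part I expect to be delicate: one wants a potential-function-style argument where each write that costs one register from $R_W$ is ``paid for'' by the writer committing to rejoin $P_W$.

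(3) \emph{From invariance to the lemma.} Once the invariant is established, suppose $p$ takes a step after $t$ and later terminates. Termination happens only upon reaching level $N$, which requires $p$ to have just completed a scan reading its own view in every register; in particular its view equals the contents of every register at the relevant reads. Since the invariant guarantees $R_W \neq \emptyset$ at all times $\geq t$, and $p$ read some register of $R_W$ during its final scan (it reads all of them), we get $W \subseteq view[p]$. Strictly, I need $p$ to read a register of $R_W$ at some time after $t$: since $p$ takes a step after $t$, it takes infinitely — or at least one more full — write-scan iteration before reaching level $N$ (reaching level $N$ from wherever it is requires at least one more scan that sees a uniform view, hence at least one full scan after $t$), during which it reads every register, and every register of $R_W$ at that moment contains a superset of $W$.

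The main obstacle is step (2): correctly formalizing why a member of $P \setminus P_W$ cannot overwrite two registers of $R_W$ within one write-scan cycle without, in between, reading a register of $R_W$ and thereby leaving $P \setminus P_W$. This is exactly the ``first to write twice must read a non-$p$ value'' argument from the informal section, and it parallels \Cref{lem:graph-2}; I would either adapt that counting argument or prove a standalone invariant-preservation lemma with a carefully chosen potential $|R_W| - |P \setminus P_W|$, checking that no single step decreases it below $1$.
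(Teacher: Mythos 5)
Your overall plan---turn the paper's informal covering intuition into a step-by-step invariant and then derive the lemma---is in the right spirit (the paper in fact states this lemma with no formal proof, relying on the covering intuition in the preceding paragraph), but two load-bearing claims in your proposal are wrong, and they are what the whole argument rests on. First, in step (1) you assert that a mid-scan processor leaves the ``scanning and has not yet read any register in $R_W$'' clause only by entering the ``$W\subseteq view[q]$'' clause. That is false: $q$ may read a register of $R_W$ only after it has been overwritten with a value not containing $W$ (and, in any case, the algorithm merges the values read into the view only at the end of the scan), so $q$ can exit the second clause without ever acquiring $W$. Consequently, with the paper's literal definition, your potential $|R_W|-|P\setminus P_W|$ can decrease at a read step, and the invariant is not inductive without a nontrivial strengthening.

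Second, and more fundamentally, both your step (2) (``during that scan it is forced to read some register still in $R_W$, because $R_W$ is never fully evicted'') and your step (3) (``$R_W\neq\emptyset$ at all times and $p$ reads every register, hence $W\subseteq view[p]$'') infer, from the fact that at every instant \emph{some} register contains $W$, that a processor reading all registers must read $W$. This is exactly the moving-target (double-collect) fallacy that \Cref{sec:stable-views} is at pains to rule out: the identity of the $W$-holding register can change between the reads of a scan, so the scanner may visit each register only at moments when it does not contain $W$. The argument that actually works is a per-window covering count in the style of \Cref{lem:graph-2} and \Cref{lem:easy-durably-stored}: only processors outside $P_W$ can write values not containing $W$; each of them can cover at most one register of $R_W$ before it must complete a full scan; and since $|R_W|>|P\setminus P_W|$, within the window of any single scan completed after $t$ there are too few covering writes to evict all of $R_W$, so some \emph{specific} register retains $W$ \emph{throughout} that window and is read while it contains $W$. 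That ``one register survives the whole window'' statement---not instant-by-instant nonemptiness---combined with the fact that a terminating processor must first complete a full scan reading a uniform view, is what yields $W\subseteq view[p]$. Until steps (2) and (3) are rebuilt on that counting argument, the proposal has a genuine gap.
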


Thus, to show that the algorithm is a safe snapshot implementation, we are going to show that:
\begin{lemma}
    \label{lem:durably-stored}
    If a processor $p$ terminates, then $view[p]$ is durably stored despite interference by $P$.
\end{lemma}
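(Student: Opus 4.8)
The plan is to prove the stronger statement by induction on the level $\ell$ reached by a processor. Concretely, I would prove: whenever a processor $p$ completes a scan at time $t$ that sets its level to $\ell$ with view $W = view[p]$, there is a set of processors $S_{p,\ell}$ with $p \in S_{p,\ell}$ such that, at the time just before $p$ begins that scan, $W$ is durably stored despite interference by $S_{p,\ell}$, and moreover every member of $S_{p,\ell}$ has $W$ in its view or is mid-scan having not yet seen a register lacking $W$. Taking $\ell = N$ and noting $|S_{p,N}| \le N = |P|$ forces $P \setminus S_{p,\ell}$ to be too small to evict $W$, which combined with the first (easy) lemma about durable storage persisting gives the result with $Q = P$ — the slack between $|S_{p,N}|$ and $N$ is exactly what makes $|R_W| > |P \setminus P_W|$ at termination.

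The construction of $S_{p,\ell}$ follows the informal tree argument: since $p$ set its level to $\ell$, it read its own view $W$ in every register during the scan, and each such register was last written by some processor that, by the algorithm, had level $\ge \ell - 1$ and view $\supseteq W$ at that write; for $\ell \ge 1$ those writers (if their view equals $W$) reached level $\ell-1$ via their own scans, so by the induction hypothesis each comes with a set $S_{q,\ell-1}$. I would build the tree layer by layer, stopping as soon as a layer fails to strictly increase the accumulated vertex set, and let $S_{p,\ell}$ be that accumulated set. The key counting fact is that if the tree had a fresh processor at every one of the $\ell$ levels below the root then $|S_{p,\ell}| \ge \ell + 1$; contrapositively, since $|S_{p,N}| \le N$, the construction must saturate before level $0$, yielding the closure property that every member of $S_{p,\ell}$ read $W$ from registers last written by members of $S_{p,\ell}$.

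The heart of the argument — and the step I expect to be the main obstacle — is showing the durable-storage inequality $|R_W| > |S_{p,\ell} \setminus (S_{p,\ell})_W|$ holds at the moment just before $p$'s scan begins, propagated down through the induction. This is where the ``who overwrites whom'' reasoning from the informal safety sketch has to be made rigorous: one assumes for contradiction that, just before $p$'s scan, there are at least $|\overline{S_{p,\ell}}| $ registers whose view is not a superset of $W$ (equivalently $|R_W| \le |S_{p,\ell} \setminus (S_{p,\ell})_W|$ fails the other way), and then, since between the start and end of $p$'s scan each member of $S_{p,\ell}$ writes at most once, some member of $S_{p,\ell} \setminus \{p\}$ must write twice within $p$'s scan window — but the first such double-writer must, between its two writes, complete a scan, and that scan (occurring inside $p$'s scan) forces it to read a register not containing $W$, contradicting the closure property that members of $S_{p,\ell}$ only read $W$ from $S_{p,\ell}$. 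Care is needed with the ``$q$ is scanning and has not yet read any register in $R_W$'' clause of $Q_W$ so that a processor that is partway through its scan is accounted on the correct side; and care is needed about the base case $\ell = 0$, where $S_{p,0} = \{p\}$ and the statement reduces to the fact that immediately after the scan that reset $p$'s level (or the very first scan), $p$'s own view sits in some register because $p$ wrote it just before, so $|R_W| \ge 1 > 0 = |S_{p,0} \setminus (S_{p,0})_W|$. Once the inequality is established at the pre-scan instant it is preserved to the post-scan instant by a short monotonicity argument (views only grow, and $p$ itself joins $(S_{p,\ell})_W$), closing the induction.
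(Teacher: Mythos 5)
Your plan follows the paper's informal operational sketch (the tree of processors $S_{p,\ell}$), but it breaks down at exactly the points where the paper's rigorous induction is structured differently. The central gap is your claim that, at the instant just before $p$'s scan begins, every member of $S_{p,\ell}$ already has $W$ in its view or is mid-scan in the appropriate sense. This is false in general: a value $(W,\ell-1)$ that $p$ reads may stem from a scan of its writer $q$ that \emph{started after $p$'s scan started}, so $q$ may acquire $W$ and its level only inside $p$'s scan window, and at $p$'s pre-scan instant $q$'s view can be a proper subset of $W$. Appealing to the induction hypothesis for the children does not repair this, because your hypothesis delivers durable storage at each child's own pre-scan instant --- a different time for each child --- and you have no persistence argument to transport those facts to $p$'s pre-scan instant. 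The paper resolves precisely this by parameterizing its hypothesis $IH(i)$ by the set $Q$ of processors that write during $p$'s scan (with $|Q|=i-1$) and concluding durable storage at the \emph{end} of the scan; when $p$ reads from a $q$ whose scan is nested inside $p$'s scan, $IH(i)$ applies to $q$ with writer set $Q\setminus\{q\}$, and the nesting is what lets that conclusion be carried to $p$'s scan. Your hypothesis, parameterized only by the level and the tree, has no handle on who writes during the scan, so this move is unavailable.

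The counting step also has two concrete flaws. First, ``each member of $S_{p,\ell}$ writes at most once within $p$'s scan window'' is unjustified: a fast processor can complete many write--scan iterations during $p$'s scan. The usable facts are weaker: no processor can cover two registers of $R_W$ (between two covering writes it must complete a scan and read a not-yet-covered register of $R_W$, after which its view contains $W$), and, in the case where no value read by $p$ stems from a scan begun after $p$'s scan began, the $W$-supersets $p$ reads in registers outside $R_W$ must come from pairwise distinct writers; the contradiction then comes from $|R_W|+|\overline{R_W}|=N>|Q|$, not from a double write. Second, your contradiction ``that scan forces it to read a register not containing $W$, contradicting the closure property'' is a non sequitur: the closure property constrains only the particular scans recorded in the tree, not every scan a member of $S_{p,\ell}$ performs inside $p$'s window. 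Finally, passing from durable storage despite interference by $S_{p,N}$ to interference by $P$ does not follow from $|S_{p,N}|\le N$: when the tree saturates early, $P\setminus S_{p,N}$ can be large and its members may be poised to cover registers with stale values; nothing in your accounting bounds them. The paper avoids this entirely because at the top level $Q\cup\{p\}$ is literally $P$ (the condition that only members of $Q$ write during $p$'s scan is then vacuous), so no processor sits outside the interfering set.
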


To prove~\Cref{lem:durably-stored}, it will be useful to consider the notion of a write covering a register during a scan of a process $p$:
\begin{definition}[Write covering a register during a scan]
    Consider a scan of a process $p$ ending with $level[p]>0$ and $view[p]=W$ for some set $W$.
    \emph{In the context of such as scan by $p$}, we say that a processor $q$ covers a register $r$ when, after $p$ read~$r$ during its scan, $q$ writes a set of values $W'$ to $r$ such that~$W\not\subseteq W'$. %
\end{definition}

Next we state and prove an easy lemma that we will use in the proof of~\Cref{lem:durably-stored}.
\begin{lemma}
    \label{lem:easy-durably-stored}
    Consider a processor $p$ that performs a scan ending at time $t$ (when $p$ reaches label ``$write$'') with $level[p]>0$ and $view[p]=W$, for some set $W$.
    Suppose that there is a register $r$ such that, at all times during $p$'s scan, $W\subseteq r$.
    Then, at time $t$, $W$ is durably stored despite interference by $P$.
\end{lemma}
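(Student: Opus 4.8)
The plan is to establish the equivalent inequality $|R_W| > |P\setminus P_W|$ at time $t$ by a counting argument. Writing $\overline{R_W}$ for the set of registers not containing $W$ at time $t$, and using $|R_W| = N - |\overline{R_W}|$ and $|P\setminus P_W| = N - |P_W|$, this is the same as $|P_W| > |\overline{R_W}|$, which I would prove by exhibiting an injection from $\overline{R_W}$ into $P_W\setminus\{p\}$; combined with the facts that $p\in P_W$ (because $view[p]=W$) and $r\in R_W$ (because the hypothesis gives $W\subseteq r$ up to and including the last instant of the scan), this yields $|P_W|\ge|\overline{R_W}|+1$. First I would record what $level[p]>0$ forces: $p$ must have read a view equal to its own in every register during the scan, and since $view[p]$ equals $W$ throughout the scan, every register held exactly $W$ at the instant $p$ read it; in particular, since each $\rho\in\overline{R_W}$ no longer contains $W$ at time $t$, it was overwritten, at some time $\theta_\rho$ strictly after $p$ read $\rho$ and before $t$, and the last such write before $t$ placed a set not containing $W$. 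Let $q_\rho$ be the processor performing that last write; then $q_\rho\ne p$ (that write occurs while $p$ is scanning, not writing) and $\rho\ne r$.

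The engine of the argument is the shield register $r$: since $W\subseteq r$ at every time during $p$'s scan, any processor that completes a full scan whose entire time span lies in $p$'s scan window $[\tau_1,t]$ (with $\tau_1$ the time of $p$'s first read) reads $r$ when $r$ contains $W$, and therefore, because views only grow, permanently has $W$ in its view from then on. I would use this twice. First, to show $\rho\mapsto q_\rho$ is injective: if one processor $q$ performs the last-before-$t$ write to two distinct registers $\rho,\rho'\in\overline{R_W}$, say the write to $\rho$ being earlier, then between those two writes $q$ completes at least one scan, and that scan lies entirely inside $[\tau_1,t]$ (it begins after $q$'s write to $\rho$, which is after $p$ read $\rho$, hence after $\tau_1$, and it ends before $q$'s write to $\rho'$, which is before $t$); so $q$'s view contains $W$ afterward, contradicting that its later write to $\rho'$ placed a non-superset of $W$.

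It remains to show $q_\rho\in P_W$ at time $t$. After its covering write to $\rho$ at time $\theta_\rho$, processor $q_\rho$ begins a scan. If this scan completes before $t$, then it lies inside $[\tau_1,t]$, so $q_\rho$ acquires $W$ and $W\subseteq view[q_\rho]$ at time $t$, placing $q_\rho\in P_W$. Otherwise $q_\rho$ is still scanning at time $t$, and the goal is to argue that $q_\rho$ has not yet read any register of $R_W$ in this scan, which again places $q_\rho\in P_W$ via the second disjunct of the definition of $Q_W$. This last case — processors caught mid-scan at time $t$ — is the main obstacle and the place where the hypothesis has to be used most delicately: one must combine the facts that $q_\rho$'s current scan started strictly after $p$ had already read $\rho$, that $q_\rho$ wrote a non-superset of $W$ at that scan's start, and that $r$ carried $W$ throughout $[\tau_1,t]$, to rule out the possibility that $q_\rho$ has already consumed a surviving copy of $W$ from some register of $R_W$. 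Once this case is settled, the injection is complete and the lemma follows.
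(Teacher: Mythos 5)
Your approach is essentially the paper's: the shield register $r$, which holds $W$ throughout $p$'s scan, forces any processor that covers a register (i.e.\ writes a set not containing $W$ after $p$'s read of it) to read $r$, and hence absorb $W$ into its view, before it can write again; so each processor other than $p$ performs at most one covering write before time $t$. That is exactly the observation the paper's proof rests on, and your reduction of the goal to $|P_W|>|\overline{R_W}|$ together with the injectivity of $\rho\mapsto q_\rho$ is sound.

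The problem is that your argument stops exactly where the real work is: you never establish that $q_\rho\in P_W$ when $q_\rho$ is caught mid-scan at time $t$, and you say so yourself (``once this case is settled\dots''), offering only a list of facts that ``must be combined''. This is not a routine detail. It can happen that $q_\rho$ has already read, in its current scan, a register $\rho'$ that belongs to $R_W$ at time $t$ but did \emph{not} contain $W$ at the moment $q_\rho$ read it, because some other processor later (but before $t$) wrote a $W$-superset into $\rho'$. In that situation neither disjunct of the definition of $P_W$ is known to hold for $q_\rho$, so your map need not land in $P_W$ at all. The natural repair --- charging $\rho$ instead to the processor $s$ that refreshed $\rho'$ with a $W$-superset, which is indeed in $P_W$, is not $p$, and can be shown distinct from every last coverer --- does not obviously restore injectivity, since one refresh of one register can account for several such $q_\rho$'s. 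So the counting $|P_W|\geq|\overline{R_W}|+1$ is not actually proved, and closing it needs a further idea (e.g.\ exploiting the fairness of writes or a different accounting), not just bookkeeping. To be fair, the paper's own proof is a two-line argument (``each processor can cover at most one register, and the result follows'') that does not address this point either --- you have surfaced a genuine subtlety the paper glosses over --- but, taken on its own, your proposal does not prove the lemma.
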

\begin{proof}
    Observe that, for every processor $q$ that has covered a register read by $p$, $q$ must read $r$ before it writes again.
    Thus, each processor $q\neq p$ can cover at most one register, and the result follows.
\end{proof}

We now turn to the proof of~\Cref{lem:durably-stored}, which we perform by induction on the natural numbers.
\begin{definition}[Induction hypothesis]
    For $1\leq i\leq N$, let $IH(i)$ assert that for every processor $p$, for every set of processors $Q$ of cardinality $i-1$ with $p\notin Q$, if
    \begin{itemize}
        \item $p$ is at label ``$write$'' with $level[p]\geq i$ and
        \item only members of $Q$ write during $p$'s last scan,
    \end{itemize}
    then $values[p]$ is durably stored despite interference by $Q\cup \{p\}$.
\end{definition}

It is easy to see that $IH(N)$ implies~\Cref{lem:durably-stored}.
We now prove by induction that $IH(i)$ holds for every $i$ with $1\leq i\leq N$.

\begin{lemma}
    $IH(1)$ holds.
\end{lemma}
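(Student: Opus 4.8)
The plan is to unwind the definitions and see that, with $Q$ forced to be empty, $IH(1)$ collapses to an essentially trivial observation. Since the set $Q$ in the statement of $IH(1)$ has cardinality $i-1=0$, we have $Q=\emptyset$, so the hypothesis ``only members of $Q$ write during $p$'s last scan'' means that \emph{no} processor writes during that scan, and the conclusion to prove is that $W := view[p]$ (the set denoted $values[p]$ in the statement) is durably stored despite interference by $Q\cup\{p\}=\{p\}$, i.e.\ that $|R_W| > |\{p\}\setminus\{p\}_W|$.

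First I would dispose of the right-hand side. Since $p$ is at label ``$write$'' it is not scanning, and since $view[p]=W$ the processor $p$ satisfies the first disjunct in the definition of $Q_W$; hence $p\in\{p\}_W$, so $\{p\}\setminus\{p\}_W=\emptyset$, and it remains only to show $|R_W|\geq 1$, i.e.\ that at the moment $p$ reaches label ``$write$'' at least one register $r$ has $W\subseteq r$.

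Next I would show in fact $R_W=R$. Because $p$'s last scan ended with $level[p]\geq 1 > 0$, the level-update rule of the algorithm (set the level to one plus the minimum level read if $p$ read its own view in every register, and reset to $0$ otherwise) forces that during this scan $p$ read its own view in every register; consequently every register's view component equalled $p$'s view at that time, and the subsequent step in which $p$ adds the read values to $view[p]$ leaves $view[p]$ unchanged and equal to $W$. Since by hypothesis no processor writes during $p$'s scan, and since the only steps between $p$'s last read and the moment it reaches label ``$write$'' are $p$'s own level- and view-update steps (neither of which is a write), every register still has view component $W$ at that moment; thus $W\subseteq r$ for all $r$, so $R_W=R$ and $|R_W|=N>0$.

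Combining, $|R_W|=N>0=|\{p\}\setminus\{p\}_W|$, which is exactly the claim that $W$ is durably stored despite interference by $\{p\}$. I do not anticipate any real obstacle: the base case is easy precisely because with $Q=\emptyset$ no register is ever overwritten during $p$'s scan, so the ``durability'' bookkeeping is vacuous; the actual work of the induction lies entirely in the inductive step, where one must carefully track which members of $Q\cup\{p\}$ can cover a register while $p$ scans.
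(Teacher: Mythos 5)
Your proof is correct and matches the paper's approach: the paper simply declares the base case trivial, and your argument is just the careful unwinding of the definitions (with $Q=\emptyset$, no register can change during $p$'s scan, so all registers still contain $W$ and $\{p\}\setminus\{p\}_W=\emptyset$) that justifies that triviality.
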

\begin{proof}
    We must show that, for every processor $p$, if $p$ ends a scan with $level[p]>0$ and no processor writes during $p$'s scan, then $view[p]=W$ is durably stored despite interference by $\{p\}$ itself.
    This is trivial.
\end{proof}

\begin{lemma}
    For every $i$ such that $1\leq i<N$, $IH(i)$ implies $IH(i+1)$.
\end{lemma}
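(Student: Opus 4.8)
The plan is to fix a processor $p$ at label ``$write$'' with $level[p]\ge i+1$ and view $W:=view[p]$, together with a set $Q$ of $i$ processors with $p\notin Q$ such that only members of $Q$ write during $p$'s last scan, which I will call the interval $[t_0,t_1]$ (if strictly more than $i$ distinct processors write during $p$'s last scan, then $IH(i+1)$ is vacuously true for this $p$, so we may assume $|Q|=i$). Because $level[p]\ge i+1>0$, the algorithm guarantees that during this last scan $p$ read exactly $W$ in every register and that the minimum level it read was $\ge i$; hence for each register $r$ the last write to $r$ preceding $p$'s read of $r$ wrote a pair $(W,\ell_r)$ with $\ell_r\ge i$, so the processor performing that write --- call it a \emph{predecessor} of $p$ --- was itself at label ``$write$'' with view $W$ and level $\ge i$ at that moment. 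The goal is to show $W$ is durably stored despite interference by $Q\cup\{p\}$ at time $t_1$.

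First I would dispose of the easy case: if there is a register $r$ with $W\subseteq r$ at every time during $p$'s last scan, then \Cref{lem:easy-durably-stored} already gives that $W$ is durably stored despite interference by $P$ at time $t_1$; and since ``durably stored despite interference by $X$'' only becomes easier as $X$ shrinks --- for $X'\subseteq X$ one has $X'\setminus X'_W\subseteq X\setminus X_W$ while $R_W$ is unchanged --- it is in particular durably stored despite interference by $Q\cup\{p\}$, and we are done.

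The heart of the argument is the complementary case, where no single register carries $W$ throughout $[t_0,t_1]$. Here I would drill down through levels using the predecessor relation: $p$ has a predecessor at level $\ge i$ with view $W$, which (being at level $\ge i\ge 1$) had, before its relevant write, completed a scan reading $W$ everywhere with minimum level $\ge i-1$, hence has itself a predecessor at level $\ge i-1$ with view $W$, and so on. Every predecessor whose relevant write falls inside $[t_0,t_1]$ lies in $Q$; since $p\notin Q$ and $|Q\cup\{p\}|=i+1$, a chain of such predecessors confined to $[t_0,t_1]$ cannot keep producing fresh processors beyond depth $i$. Tracing the chain, one of two things happens. Either some relevant write occurred before $t_0$, so a register held $W$ undisturbed across the start of $p$'s scan --- a situation one reduces to the easy case above (or handles by applying $IH$ at that earlier moment). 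Or the chain stays inside $[t_0,t_1]$ and, when it can no longer add a new processor, exhibits a processor $q\in Q$ that completes a full write--scan cycle entirely inside $[t_0,t_1]$, finishing with $view[q]=W$ and $level[q]\ge i$. In that event the writers during $q$'s last scan are all members of $Q\setminus\{q\}$, a set of cardinality $i-1$ not containing $q$, so $IH(i)$ applies to $q$ and yields that $W$ is durably stored despite interference by $(Q\setminus\{q\})\cup\{q\}=Q$ at the moment $q$ reaches label ``$write$''. Since only members of $Q$ write between that moment and $t_1$, I would then invoke the persistence of the durably-stored property --- the formal content of the intuition stated just after the definition --- to carry it forward to time $t_1$. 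Finally, at $t_1$ processor $p$ is at ``$write$'' with $view[p]=W$, so $p\in(Q\cup\{p\})_W$, which makes ``durably stored despite interference by $Q$'' coincide with ``durably stored despite interference by $Q\cup\{p\}$'', completing the step.

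I expect two points to carry the real weight. The first is making the drill-down precise: specifying which scan is ``relevant'' for each predecessor, showing that a chain confined to $[t_0,t_1]$ genuinely introduces a new processor at each step until forced to stop, and at that stopping point extracting a processor $q\in Q$ whose entire last scan --- not merely its final write --- lies within $[t_0,t_1]$, while cleanly handling the boundary case where a relevant write predates $t_0$. The second is the persistence lemma itself: that ``$W$ is durably stored despite interference by $X$'' is preserved by any step in which only members of $X$ write. This is delicate because the sets $R_W$ and $X_W$ both move as registers are overwritten and as scanning processors read registers currently in $R_W$, so one must check that the inequality $|R_W|>|X\setminus X_W|$ survives all four step types (possibly by strengthening it to a more robust inductive invariant that already accounts for scanning processors committed to re-entering $X_W$).
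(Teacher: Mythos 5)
Your setup, your ``easy case'' via \Cref{lem:easy-durably-stored}, and your eventual appeal to $IH(i)$ for a processor $q$ whose whole scan is nested inside $p$'s scan do correspond to the paper's proof (its Fact~(1)). But the route you take to show that one of these situations must occur --- the drill-down chain of predecessors --- has a genuine gap at exactly the hard case. When a relevant write predates $t_0$, you claim this reduces to the easy case because ``a register held $W$ undisturbed across the start of $p$'s scan''; that is not enough: a register containing $W$ at $t_0$ (a member of $R_W$) can perfectly well be covered later in $p$'s scan, after $p$ has read it, so \Cref{lem:easy-durably-stored} does not apply; and applying the induction hypothesis ``at that earlier moment'' fails too, since before $t_0$ there is no restriction on which processors write, so no interference set of cardinality $i-1$ is available and nothing carries a durably-stored property forward across writes by processors outside $Q\cup\{p\}$. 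Note that your chain exits the interval at depth one whenever $p$ reads a register whose content was already $W$ at $t_0$, so this is not a corner case but the central difficulty. The paper resolves it not with a chain but with a counting argument: if no register of $R_W$ stays uncovered and no read of $p$ comes from a scan started after $t_0$ (otherwise Fact~(1) applies), then covering all of $R_W$ requires $|R_W|$ distinct members of $Q$ (a would-be double coverer must first read a still-uncovered member of $R_W$, absorbing $W$ into its view), making $p$ read $W$ in every register outside $R_W$ requires $|\overline{R_W}|$ further distinct members, and the two sets are disjoint, giving $|Q|\ge N>i$, a contradiction.

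A second gap: even when the chain stays inside $[t_0,t_1]$, the processor you extract need not meet the hypotheses of $IH(i)$. The level guarantee decays with depth (a depth-$j$ predecessor is only known to have had level $\ge i+1-j$ at its write), and a repetition in the chain does not obviously exhibit a full write--scan cycle inside the interval ending with level $\ge i$ and with at most $i-1$ writers during that scan. The paper sidesteps this by applying $IH(i)$ only to a depth-one predecessor: a $q$ from which $p$ itself reads and whose scan started after $t_0$, for which $level[q]\ge i$ follows from $level[p]\ge i+1$ and the writers during $q$'s scan lie in $Q\setminus\{q\}$, of cardinality $i-1$. Your concern about a persistence lemma (that ``durably stored despite interference by $X$'' survives subsequent steps in which only members of $X$ write) is legitimate, but the paper is equally terse on that point; the decisive missing piece in your proposal is the counting argument for the case in which every register of $R_W$ gets covered.
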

\begin{proof}
    Suppose that $1\leq i<N$ and that $IH(i)$ holds.
    Consider a processor $p$ that performs a scan ending at some time $t$ with $p$ at label ``$write$'' and $level[p]$ at least $i+1$ and $view[p]=W$ for some set $W$.
    Consider a set of processors $Q$ of cardinality $i$ and such that $p\notin Q$.
    Assume that only members of $Q$ write during $p$'s scan.
    We must show that the set of value $W$ is durably stored despite interference by $Q\cup \{p\}$.

    First, consider a processor $q\in Q$ and suppose that $p$ reads from $q$ during its scan, and suppose that $q$ started the scan leading to its write after $p$ started its scan.
    In this case, we now use the induction hypothesis to show that $W$ is durable stored despite interference by $Q\cup\{p\}$.
    Note that only the members of $Q\setminus \{q\}$ write during $q$'s scan and $|Q\setminus \{q\}| = i-1$.
    Moreover, since $p$ ends its scan with $level[p]\geq i+1$, $q$ must end its scan with $level[q]\geq i$.
    Thus, by $IH(i)$, $W$ is durably stored despite interference by $Q\setminus \{q\}$, and thus, since $W\subseteq view[q]$ and $W\subseteq view[p]$, $W$ is durably stored despite interference by $Q\cup\{p\}$.
    To sum up, the induction hypothesis implies that:
    \begin{equation}
        \parbox{.85\textwidth}{If $q\in Q$ and $p$ reads from $q$ during its scan and $q$ started the scan leading to its write after $p$ started its scan, then $W$ is durably stored despite interference by~$Q\cup\{p\}$}
    \end{equation}

    Now consider the set $R_W$ or registers $r$ such that $W\subseteq r$ when $p$ starts its scan.
    We have two cases.
    If one register in $R_W$ is never covered, then by~\Cref{lem:easy-durably-stored} we are done.

    Otherwise, note that, by Fact (1) above, we can assume that $p$ does not read from any member $q$ of $Q$ a value stemming from a scan of $q$ that started after $p$ started its scan.
    Now, to cover $R_W$, a set $Q_{\overline{W}}$ of $|R_W|$ different members of $Q$ must write, since no processor can cover two because it would need to read the not-yet-covered members of $R_W$ in between.
    Moreover, for $p$ to complete its scan, we need a set $Q_W$ of $|\overline{R_W}|$ different members of $Q$ to write a superset of $W$ in each register $r\notin R_W$, since none of them can write twice by our assumption above.
    Again by our assumption above, we must have $Q_{\overline{W}}\cap Q_W=\emptyset$, and thus $|Q|=|Q_{\overline{W}}|+|Q_W|$.
    But $|Q_{\overline{W}}|+|Q_W|=N$ and $|Q|<N$, which is a contradiction.
\end{proof}

\subsubsection{Wait-Freedom}

We now show that every processor that takes enough steps terminates.
Suppose towards a contradiction that $p$ takes infinitely many steps and does not terminate.

As we have seen, the stable-view graph of the execution is a directed acyclic graph.
Thus it must have a source $V_S$, and, after GST, all processors that have stable view $V_S$ read only from processors that have stable view $V_S$.
In this case, after GST the processors that have view $V_S$ only ever read view $V_S$ and thus, according to the algorithm, they keep incrementing their minimum level indefinitely.
Thus those processors must reach level $N$ and terminate.
This is a contradiction.

\section{Renaming}
\label{sec:renaming}

For the renaming algorithm we adapt the renaming algorithm of Bar-Noy and Dolev\cite{bar-noy_shared-memory_1989} that transforms snapshots to integers in the range $1..(N+1)N/2$.
Their algorithm proceeds as follows: Given an (non-group) implementation of the snapshot task, since we know that two snapshots of identical size $z$ must be identical, a processor can rank itself in its own snapshot, obtaining, say, ranks $r\leq z$, can now take the name $((z-1)z/2) +r$.
In other words, the algorithm reserves the integer 1 for the snapshot of size 1, integers 2 and 3 for the snapshot of size 2, the integers 4, 5, and 6 for the snapshot of size 3, etc.
A specification of the algorithm appears in~\Cref{fig:renaming}.
The algorithm is adaptive in the sense that it does not need to know $N$.

\begin{figure}[h]
\centering
\includegraphics[]{./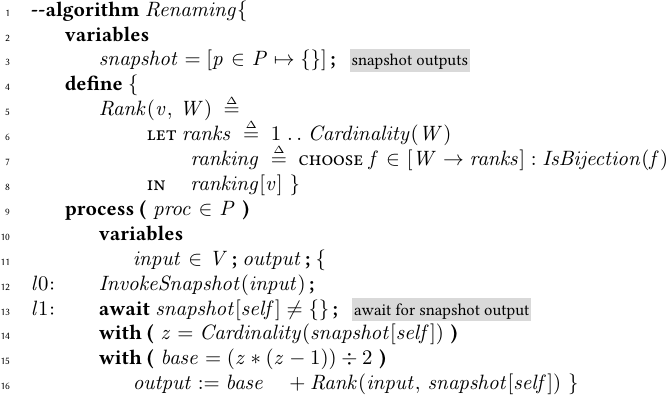}
\caption{The adaptive renaming algorithm in the PlusCal algorithm language.}
\label{fig:renaming}
\end{figure}

In our setting we must use a group implementation of the snapshot task, and we lose the fact that two snapshots of identical size are identical because processors in the same group are allowed to output incomparable snapshots.
But there is a saving grace: Notice that two incomparable snapshots must come from processors in the same group $g$, otherwise group-solvability is violated.
Moreover, any bigger snapshot returned by a different group $g'$ must include the union of the incomparable snapshots of $g$, and any smaller snapshot returned by $g'$ must be a subset of the intersection of the incomparable snapshots returned by members of $g$.

As a consequence of the above, if two processors in the same group return incomparable sets then they essentially ``reserve''  all the sizes between the intersection and the union of their snapshots; consequently they will not collide in choosing an integer with any processor from a different group.
They may nevertheless collide with each other, but this is allowed by group solvability since they are in the same group.

This subtle justification above in adapting the algorithm of Bar-Noy and Dolev is glossed over (a.k.a.\ mistake!) by Gafni~\cite{gafni_group-solvability_2004}.

\section{Obstruction-Free Consensus}
\label{sec:consensus}

To obtain an obstruction-free consensus algorithm, we first note that the snapshot algorithm of~\Cref{sec:snapshot} can easily be made long-lived, and then, similarly to Guerraoui and Ruppert~\cite{guerraoui_what_2005}, we use the long-lived snapshot in an adaptation of an algorithm of Chandra~\cite{chandra_polylog_1996}.

\paragraph{Long-lived snapshot}

In long-lived snapshot, each processor that has produced a snapshot output can invoke the snapshot again with a new input, receive a new output, invoke the snapshot again, etc.
We must guarantee that outputs only contain input values of participating processors, that the output of each processor contains all the values it has used as input so far, and that every two outputs are related by containment.

We obtain a non-blocking, long-lived snapshot algorithm by tweaking the single-shot snapshot algorithm of~\Cref{fig:snapshot}.
Processors use the algorithm of~\Cref{fig:snapshot}, keeping their local state between invocations, and, upon a new invocation, simply reset their level to $0$ and add their new input to their view. %
Since the single-use snapshot algorithm is wait-free, this long-lived snapshot algorithm is clearly non-blocking and obstruction-free.

The attentive reader might notice that this long-lived snapshot is not specified in terms of groups.
While this is inelegant, we use the definition above because it allows us to argue that the consensus algorithm below is correct for exactly the same reason as in the work of Chandra~\cite{chandra_polylog_1996}.
However, in the same vein as for tasks, we could define group solvability of long-lived problems by interpreting inputs as groups and considering that each invocation by the same processor is done by a different logical processor.
We leave it to future work to prove that the consensus algorithm below is correct if we assume it uses a group solution to long-lived snapshot.

\paragraph{Consensus}

We now put the long-lived snapshot algorithm to use by derandomizing, like Guerraoui and Ruppert~\cite{guerraoui_what_2005}, the probabilistic shared-coin algorithm of Chandra~\cite[Section 4.1]{chandra_polylog_1996}.

The algorithm appears in~\Cref{fig:consensus}, and it works as follows.
The processors each maintain a monotonically-increasing local timestamp, initialized to 0, and they repeatedly invoke the long-lived snapshot, providing as input their preferred values (which is initially their consensus input) and their timestamp.
Upon obtaining a snapshot, a processor decides a value $v$ if $v$ appears in the snapshot with a timestamp at least 2 greater than any other value; if no such value exists, the processor updates its preference to the value with the highest timestamp and its snapshot to the highest timestamp plus one.

\begin{figure}[h]
\centering
\includegraphics[]{./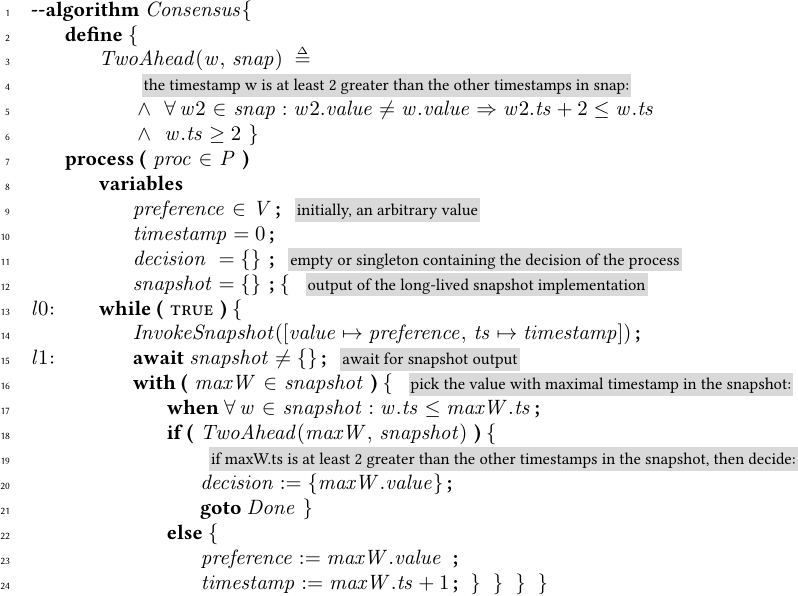}
\caption{The obstruction-free consensus algorithm in the PlusCal algorithm language. Note that the sequence of steps between any two labels (here the labels are $l0$ and $l1$) is executed atomically.}
\label{fig:consensus}
\end{figure}

Note that all communication happens through the long-lived snapshot, i.e.\ the consensus algorithm does not directly write to any register, and so there is no issue of steps of the consensus algorithm interfering with the long-lived snapshot.

The algorithm is clearly obstruction-free: if a processor runs solo it will first adopt the value with the highest timestamp and then keep increasing its timestamp and decide the value.
Safety follows like in the work of~\cite{chandra_polylog_1996}.

\section{Related work}

\paragraph{Processor anonymity}

There is a large literature studying shared-memory distributed computing with anonymous processors and non-anonymous memory, both from computability and complexity perspectives.
Below we mention only the two most relevant works.

Guerraoui and Ruppert~\cite{guerraoui_what_2005} ask what can be implemented deterministically with anonymous processors (but they do not formalize a precise notion of solvability).
They propose a wait-free atomic memory snapshot algorithm, an obstruction-free consensus algorithm, and show that an object can be implemented obstruction-free if and only if it is idempotent.
Their constructions do not work in the fully-anonymous model because the weak-counter implementation that they use in the atomic memory snapshot algorithm relies on making processors race in one direction of an array of registers to be the first to write a position.
The weak counter requires $N$ registers for a single-shot use, but it uses an unbounded number of registers in the long-lived version that they use to implement consensus.
Their implementation of the weak-counter is not possible with anonymous memory because there is no shared notion of order of the registers.

Once we obtain a snapshot implementation in the fully-anonymous, we obtain an obstruction-free consensus algorithm, like Guerraoui and Ruppert, by derandomizing the consensus algorithm of Chandra~\cite{chandra_polylog_1996}.
However, showing its correctness when substituting a group solution of the snapshot task for the atomic memory snapshots used by Guerraoui and Ruppert is subtle.

Gelashvili~\cite{gelashvili_optimal_2018} shows that any obstruction-free, processor-anonymous consensus implementation must use $\Omega(n)$ registers.
As we show in~\Cref{sec:no-less-than-n}, in the fully-anonymous model we easily obtain a lower bound of $N$ registers for any non-trivial synchronization task.

Yanagisawa~\cite{yanagisawaWaitfreeSolvabilityColorless2019} studies the wait-free solvability of colorless tasks under processor anonymity.

\paragraph{Memory anonymity}

Taubenfeld~\cite{taubenfeld_coordination_2017,taubenfeld_anonymous_2022} introduces memory-anonymous algorithms, which are algorithms for read-write shared-memory models in which the registers are anonymous, but the processors are not anonymous.
They provide algorithms, space lower bounds, and impossibility results.
A number of authors follow suit with computability and complexity results for memory-anonymous systems~\cite{aghazadeh_optimal_2019,godard_bezouts_2020,taubenfeld_set_2019}.

\paragraph{Full anonymity}
Raynal and Taubenfeld~\cite{raynal_mutual_2020} introduce the fully-anonymous model, where both processors and memory are anonymous.
They also present a mutual-exclusion algorithm relying on an atomic read-modify-write primitive.

In a paper that is the inspiration behind our work, Raynal and Taubenfeld~\cite{raynal_fully_2021} ask whether consensus and set-agreement can be solved obstruction-free in the fully-anonymous model.
They present a set-agreement algorithm for any number of processors using three registers, an algorithm for solving the consensus problem for two processors using three registers, and finally an algorithm for solving consensus for any number of processors which relies on an atomic snapshot primitive.
However, they leave the problem of implementing the atomic memory snapshot --- and thus consensus --- from individual read and writes open.
In this paper, we solve the snapshot task and the obstruction-free consensus problem, but not the atomic memory-snapshot question.
In fact, the TLC model-checker confirms that, when there are 3 processors, the algorithm of~\Cref{fig:snapshot}, which solves the snapshot task, does not provide atomic memory snapshots: in some executions, a processor returns a set of inputs $I$ such that at no point in time did the memory contain exactly the set of inputs $I$.
We conjecture that it is not possible to implement atomic memory snapshots in the fully-anonymous model.

Finally, Imbs et al.\cite{imbs_election_2022} study election problems in variants of the fully-anonymous model with read-modify-write primitives.

\paragraph{Register allocation}

Results unrelated to anonymity but which also encounter the difficulty of processors write-stepping on each other appear in~\cite{delporte-gallet_adaptive_2013} and in~\cite{delporte-gallet_linear_2015}.

\paragraph{Group solvability}

Gafni~\cite{gafni_group-solvability_2004} introduces the notion of group-solvability of tasks and studies group solvability with groups of infinite size in the standard, non-anonymous model.
His work adapts the algorithm in~\cite{bar-noy_shared-memory_1989} for a wait-free algorithm that group-solves renaming with $N(N+1)/2$ names, which we reuse in this paper.
As we note in~\Cref{sec:renaming}, the correctness of the algorithm when using a group solution to the snapshot task instead of atomic memory snapshot is non trivial and was missed in Gafni's paper.
Gafni also shows that immediate-snapshot for $3$ processors is not wait-free group solvable.

\section{Conclusion}

In this paper we approach anonymity and full-anonymity not by consider ad-hoc synchronization problems but in general by starting with tasks.
We connect processor anonymity with group solvability and we ask the eventual-pattern question, which uncovers a key structural property of read-write computation in the fully-anonymous model.

Our results have implications for other models.
For instance, if we consider just processor anonymity, it is not clear that the results of Guerraoui and Ruppert~\cite{guerraoui_what_2005} can be achieved with just $N$ registers.
The results of this paper, obtained under full anonymity, clearly show that the answer is yes.
Thus in terms of the possibility of solving a task within some minimal number of registers this paper resolves the question in the weakest model.

On the negative side, the connection with solvability in terms of groups and the results of Gafni\cite{gafni_group-solvability_2004} imply that the immediate-snapshot task is not group-solvable even if just processors, and not memory, are anonymous.
Thus it is also not solvable in the fully-anonymous model.
Stepping outside the realm of tasks, we also conjecture that the atomic memory snapshot problem is not solvable in the fully-anonymous model.

In the one anonymity combination not investigated, where only memory is anonymous, the group connection is lost.
Hence it is not a-priori clear that immediate-snapshots is not solvable with just memory anonymity.

Finally, we conjecture that, under wait-freedom, a model with just memory anonymity with $N$ registers is as powerful as the single-writer multi-reader read-write, fully non-anonymous model.
Last but not least, in subsequent work we similarly intend to prove that adding memory anonymity to processor anonymity is no real hindrance --- whatever task is solvable wait-free with just processor anonymity is solvable also when we add memory anonymity.

\bibliographystyle{ACM-Reference-Format}
\bibliography{references}

\end{document}